\documentclass[asa,12pt,psfig,twoside]{article}
\usepackage{amsmath}
\usepackage{amsthm}
\usepackage{mathptmx}
\usepackage{graphicx}
\usepackage{float}
\usepackage[titletoc,title]{appendix}
\usepackage[labelfont=bf]{caption} 
%\captionsetup{skip=0pt}
\usepackage[a4paper,margin=1in]{geometry}
\usepackage[hang,flushmargin]{footmisc}
\usepackage[%
colorlinks=true,%
urlcolor=blue%
]{hyperref}
\usepackage{apacite} 
\bibliographystyle{apacite}
\usepackage{sectsty}

\sectionfont{\fontsize{12}{15}\selectfont}
\subsectionfont{\fontsize{12}{15}\selectfont}
\subsubsectionfont{\fontsize{12}{15}\selectfont}

\usepackage{titlesec}
\titlelabel{\thetitle.\quad}
\titlespacing\section{-5pt}{12pt plus 4pt minus 2pt}{0pt plus 2pt minus 2pt}
\titlespacing\subsection{-5pt}{12pt plus 4pt minus 2pt}{0pt plus 2pt minus 2pt}
\titlespacing\subsubsection{-5pt}{12pt plus 4pt minus 2pt}{0pt plus 2pt minus 2pt}

% make @ act like a letter
%\@addtoreset{equation}{section}
\makeatletter
\makeatother  % make @ act like a non-letter
\renewcommand{\theequation}{\arabic{equation}}
%\@addtoreset{equation}{section}
%\hoffset=-0.675in
%\advance\topmargin by -0.75truein
%\oddsidemargin=0.675truein
%\evensidemargin=0.675truein
%\advance\textheight by 1.25truein
%\setlength\textwidth{6.5in}
\vsize=9.0in
%\setstretch{1.5}
%%\vsize=9.2in
\def\mybibliography#1{{\noindent \Large \bf References}\list
 {}{\setlength{\leftmargin}{0.4in}\setlength{\labelsep}{0pt}
\itemindent=-\leftmargin}
 \def\newblock{\hskip .02em plus .20em minus -.07em}
 \sloppy\clubpenalty4000\widowpenalty4000
 \sfcode`\.=1000\relax}
\newbox\TempBox \newbox\TempBoxA
\def\uw#1{%
  \ifmmode\setbox\TempBox=\hbox{$#1$}\else\setbox\TempBox=\hbox{#1}\fi%
  \setbox\TempBoxA=\hbox to \wd\TempBox{\hss\char'176\hss}%
  \rlap{\copy\TempBox}\smash{\lower9pt\hbox{\copy\TempBoxA}}%
}
\newbox\TempBox \newbox\TempBoxA
\def\uwd#1{%
  \ifmmode\setbox\TempBox=\hbox{$#1$}\else\setbox\TempBox=\hbox{#1}\fi%
  \setbox\TempBoxA=\hbox to \wd\TempBox{\hss\char'176\hss}%
  \rlap{\copy\TempBox}\smash{\lower10pt\hbox{\copy\TempBoxA}}%
}
\def\mathunderaccent#1{\let\theaccent#1\mathpalette\putaccentunder}
\def\putaccentunder#1#2{\oalign{$#1#2$\crcr\hidewidth
\vbox to.2ex{\hbox{$#1\theaccent{}$}\vss}\hidewidth}}

\usepackage{setspace}
\usepackage{indentfirst} 

\usepackage{fancyhdr}
\pagestyle{fancy}

\fancyhf{}% Clear header/footer
\fancyhead[OL]{\footnotesize 2017]}
\fancyhead[OC]{\footnotesize Bayesian Logistic Regression for Small Areas with Numerous Households}% Title on Even page, Centred
\fancyhead[OR]{\footnotesize \thepage}
\fancyhead[EL]{\footnotesize \thepage}
\fancyhead[EC]{\scriptsize BALGOBIN NANDRAM, ~LU CHEN, ~SHUTING FU AND BINOD MANANDHAR}% Title on Even page, Centred
\fancyhead[ER]{\footnotesize [Vol. .., Nos. 1\&2}

\fancypagestyle{firstpage}{%
	\fancyhf{}%
	\setlength{\headsep}{1pt}
	\lhead{\textit{Statistics and Application} {\bf{\{ISSN 2452-7395(online)\}}}\\
	Volume .., Nos. …, … (New Series), pp ...‒…}
	
}

\setlength{\parindent}{1.5em}
\setlength{\parskip}{\baselineskip}%
\linespread{1}
\usepackage{lipsum}

\newcommand\blfootnote[1]{%
	\begingroup
	\renewcommand\thefootnote{}\footnote{#1}%
	\addtocounter{footnote}{-1}%
	\endgroup
}

\newcommand{\ytil}{\mathunderaccent\tilde{y}}
\newcommand{\mutil}{\mathunderaccent\tilde{\mu}}

\newcommand{\wtil}{\mathunderaccent\tilde{w}}

\newcommand{\betatil}{\mathunderaccent\tilde{\beta}}
\newcommand{\xtil}{\mathunderaccent\tilde{x}}
\newcommand{\tautil}{\mathunderaccent\tilde{\tau}}
\newcommand{\gtil}{\mathunderaccent\tilde{g}}
\newcommand{\gmtil}{\mathunderaccent\tilde{\gamma}}
\newcommand{\ometil}{\mathunderaccent\tilde{\omega}}
\newcommand{\jtil}{\mathunderaccent\tilde{j}}
\newcommand{\titlesize}{\fontsize{16pt}{20pt}\selectfont}
\newcommand{\namesize}{\fontsize{12pt}{20pt}\selectfont}
\renewenvironment{abstract}
{\begin{quote}
		\noindent \rule{\linewidth}{.5pt}\par{\bfseries \abstractname}}
	{\medskip\noindent \rule{\linewidth}{.5pt}
	\end{quote}

} 

%use to adjust the space after a period!
\xspaceskip3pt

\begin{document}

\newtheorem{theorem}{Theorem}[section]

\newtheorem{proposition}{Proposition}[section]
\newtheorem{corollary}{Corollary}[section]
\newtheorem{lemma}{Lemma}[section]

\hypersetup{linkcolor=black}

\vspace*{.05in}

\begin{center}

{\titlesize \textbf{Bayesian Logistic Regression for \\ \vspace{6pt}
Small Areas with Numerous Households}}

\vspace*{12pt}
{\namesize \bf Balgobin Nandram\blfootnote{Corresponding Author: Balgobin Nandram\\E-mail:\href{mailto:balnan@wpi.edu}{balnan@wpi.edu} }, Lu Chen, Shuting Fu and Binod Manandhar}\\
\textit{Department of Mathematical Sciences, Worcester Polytechnic Institute,\\
	 Worcester, MA, USA}\\
\vspace*{12pt}
Received ….; Revised ….; Accepted …..
\end{center}

\begin{abstract}
		
\setlength{\parindent}{3em}
  We analyze binary data, available for a relatively large number (big data) of families (or households), which are within small areas, from a population-based survey. Inference is required for the finite population proportion of individuals with a specific character for each area. To accommodate the binary data and important features of all sampled individuals, we use a hierarchical Bayesian logistic regression model with each family (not area) having its own random effect. This modeling helps to correct for overshrinkage so common in small area estimation. Because there are numerous families, the computational time on the joint posterior density using standard Markov chain Monte Carlo (MCMC) methods is prohibitive. Therefore, the joint posterior density of the hyper-parameters is approximated using an integrated nested normal approximation (INNA) via the multiplication rule. This approach provides a sampling-based method that permits fast computation, thereby avoiding very time-consuming MCMC methods. Then, the random effects are obtained from the exact conditional posterior density using parallel computing. The unknown nonsample features and household sizes are obtained using a nested Bayesian bootstrap that can be done using parallel computing as well. For relatively small data sets  (e.g., 5000 families), we  compare our method with a MCMC method to show that our approach is reasonable. We discuss an example on health severity using the Nepal Living Standards Survey (NLSS). 
  
	\medskip
	
	\noindent {\textit{Keywords}}: Bayesian bootstrap, Big data, Integrated nested normal approximation, Overshrinkage, 
	Sampling based method, Survey weights.
	
\end{abstract}

\setcounter{page}{1}
\thispagestyle{firstpage}

\newpage

\section{Introduction}
\label{S:1}

In the second Nepal Living standards Survey (NLSS II), there are data from households. One question of interest is health status (good versus poor health), a binary variable, and there are several covariates that can explain the binary outcomes. Our interest is to provide smoothed
estimates of the household proportions of members in good health for both sampled and nonsampled households. This is a general question.
Direct estimation is not reliable for such a problem (not many members in each household), and there is a need to borrow strength from the ensemble, as in small area estimation. This is generally done using Bayesian logistic regression. This problem is exasperated because there are numerous households (small areas). A hierarchical Bayesian logistic regression model with random effects is used to capture the
variation of members within and across households. Markov chain Monte Carlo (MCMC) methods require extensive monitoring and the entire
procedure to fit the model is time consuming mostly because there are numerous random effects. Therefore, the joint posterior density of the hyper-parameters is approximated using an integrated nested normal approximation (INNA) via the multiplication rule of probability. This approach provides a sampling-based method that permits fast computation, thereby avoiding very time-consuming MCMC methods from the exact method. In this paper, our main contribution is to obtain an approximate joint posterior density for the hierarchical Bayesian logistic
regression model and to get reasonable estimates and standard errors of small area parameters (e.g., household proportions).
 
The estimation of parameters of the binary logistic regression with random effects is not straight forward due to fact that the likelihood involves multiple integrals. In case of Bayesian analysis, a natural approach to inference in mixed models was proposed by Paulino, Silva
and Achcar (2005). They estimated the random effects, which are treated as parameters in the presence of misclassified data. They also showed that if the posterior distribution is not possible to be obtained analytically, MCMC methods can be used to approximate them. Souza and Migon (2010) proposed that the inference problem can be solved in an easier way if the random effects of the mixed models are distributed as Student-t or finite mixture of normal distributions. Liu and Dey (2008) used prior distributions like skew-normal distribution.  Santos, Loschi and Arellano-Valle (2013) provided different prior and posterior interpretations for the parameters in the logistic regression model with random intercepts when skew normal distribution are assumed to model random effects. 
They obtained the prior distributions for the different parameters and they discussed odds ratio and median odds ratio using skew-normal
distributions for the random effects.  They  concluded that the misspecification of the random effects parameters can give poor estimate. 
Larsen et al. (2000) gave interpretations of the parameters in the logistic regression model with random effects.
Finally, Chen, Ibrahim and Kim (2008) described important properties of logistic regression for a single population 
(no random effects), and showed how to implement Jeffreys's  prior for binomial regression models.

Our work on logistic regression dates back to Nandram (1989) who discussed discrimination between the logit and the complementary log-log
link functions. Nandram (2000) reviewed the paper of Ghosh et al. (1998) on generalized linear models, logistic regression and Poisson regression being two important special cases. Nandram and Erhardt (2005) showed how to analyze binary data with covariates to maintain
conjugacy for both logistic and Poisson regression model. Nandram and Choi (2010) showed how to analyze binary data with covariates
under nonignorable nonresponse. Different from most researchers, Nandram and Choi (2010) showed how to use logistic regression
to obtain propensity scores, an interesting part of their paper, for small area estimation. Roberts, Rao and Kumar (1987) discussed logistic regression for sample survey data (not small area estimation). Nandram and Chen (1996) show how to accelerate the Gibbs sampler
for a model with latent variables introduced earlier by Albert and Chib (1993) for Bayesian probit analysis. 

Albert and Chib (1993) started an innovative stream of research on Bayesian probit analysis, not logistic regression; they agrued that logistic regression is approximately a special case of their probit analysis. However, we now know that this approximation is poor
in the tails and their algorithm  is a poorly mixing Gibbs sampler (Holmes and Held 2006).  For probit analysis, Holmes and Held (2006) showed how to solve this mixing problem  by incoroporating  latent variables and using the block Gibbs sampler (i.e., some variables are drawn simultaneously). Holmes and Held (2006) extended their approach to logistic regression, albeit for a single sample, not for small area estimation as in the
case of numerous small areas that we are studying here. Technically, even for a simple sample, their sampling algorithm  is very complicated using rejection sampling, the Kolmogorov-Smirnov distribution, part of a representation of the standard logistic distribution, and a generalized inverse Gaussian distribution. However, once a user-friendly program is available, the complexity does not matter. Note
that for simple logistic regression (i.e., a single sample), the Metropolis sampler or rejection sampling can be used in a straightforward manner, and this is faster and much simpler than the method of Holmes and Held (2006).  However, it will be extremely difficult to apply their computational techniques in our case simply because the sum of two logistic random variables (we have two error terms, not one) is not another logistic random variable. 

The other side of our application is that there are numerous small areas (households) and MCMC methods cannot handle
them in real time. So our problem can be classified as a ``big data'' problem.
Scott et al. (2013) defined ``big data'' as data that are too big to comfortably process on a single
machine, either because of processor, memory, or disk bottlenecks. They considered consensus 
Monte Carlo methods which split the data to several machines. Communication between large numbers of machines
is expensive (regardless of the amount of data being communicated), so there is a need for
algorithms that perform distributed approximate Bayesian analyses with minimal communication.
Consensus Monte Carlo operates by running a separate Monte Carlo algorithm on each machine, and then averaging 
individual Monte Carlo draws across machines. One of the examples they gave is on a hierarchical Poisson regression
model (very close to logistic regression). But certainly how to split the data is problematic.  
Miroshnikov and Colon (2015) described parallel MCMC methods  for non-Gaussian posterior distributions.
Fortunately,  in survey sampling the design generally uses stratification which is not artificial, and in this case, 
consensus Monte  Carlo may not be needed; it will be a good idea for a large stratum. 

The procedure we use to approximate the posterior density of the parameters of the hierarchical Bayesian logistic
regression model, called the integrated nested normal approximation (INNA), has a closed resemblance to the
integrated nested Laplace approximation (INLA); see Rue, Martino and Chopin (2009). INNA uses a sampling-based 
procedure, that is accommodated by the multiplication rule of probability; currently INLA  is a fairly popular method 
for making approximations in complicated hierarchical Bayesian model. INLA is a promising alternative to MCMC for big data analysis. However, it requires posterior modes and, for numerous small areas, computation 
of modes becomes time-consuming and challenging for logistic regression model or any generalized linear mixed models. Yet 
INLA has found many useful applications. See, for example, Fong, Rue and Wakefield (2010) for an application on Poisson 
regression, and Illian, S\o rbye and Rue (2012) for a realistic application on spatial point pattern data. We note 
that INLA can be problematic especially for logistic and Poisson hierarchical regression models, even if the modes 
can be computed. For example, Ferkingstad and Rue (2015), attempting to improve INLA, used a copula-based correction 
which adds complexity to INLA. For a comparison of INLA and MCMC, the paper by Held, Schr\"{o}dle and Rue (2010) for
cross-validatory predictive checks is interesting. Unfortunately, the computational cost 
of INLA is exponential in the dimension of the parameter space (or hyperparameter space in the case 
of hierarchical models).  

Of course, there are many other approximations in Bayesian statistics some of which can apply directly to logistic
regression. Approximate Bayesian methods (ABC) were introduced in population genetics (e.g., Beaumont, Zhang
and Balding 2002) to deal with intractable likelihood functions and it uses summary statistics. An important
advance was made by Fearhead and Prangle (2012), who obtained a more principled approach to the construction of
summary statistics. Jaakkola and Jordan (2000), Faes, Omerod and Wand (2011) and Omerod and Wand (2010) studied variational Bayes methods.
Variation methods are very complicated, even for the simplest problem, logistic regression without random effects 
(Jaakkola and Jordan 2000), the analysis is not simple. Moreover, the approximate posteriors delivered by variational Bayes give good accuracy for individual marginal distributions, but not for the joint distribution as a whole.
	
We will not use any of these approximations. The nearest to our procedure is INLA that requires posterior modes and it
is computationally costly to run the application we have in mind because of lack of conjugacy, numerical optimization
is needed. Variational Bayes methods are mathematically too complicated and ABC is not accurate enough even though this is
also an active area of research. Instead of finding the posterior modes, INNA finds approximate modes in closed form, facilitated by the empirical logistic transform (Cox ans Snell 1972). Here both the gradient vector (gradient vector is not zero though) and the gradient term is kept in the second order Taylor's series expansion of the posterior distribution of the regression coefficients. So our method does not need posterior modes as in  INLA; this is an enormous saving in computing time, even more so for numerous households. 

The plan of the rest of the paper is as follows. In Section 2, we describe our main contribution about
our approximation to the joint posterior density. In particular, we describe the integrated nested normal
approximation (INNA) and some theoretical results are provided. In Section 3, we present an  illustrative
example using the Nepal Living Standards Survey (NLSS II).  We compare the approximate method with the
exact method, which is presented  in Appendix A. It is worth noting that the word ``exact'' refers to MCMC without further
approximation. In Section 4, we have discussions and two extensions, both of them can be used to accommodate the NLSS II data
better. Additional technical details are given in the appendices.

\section{ Approximate Theory and Method}
\label{S:2}
\vspace*{-5pt}

The method we developed here for many small areas can be applied to any generalized linear model in the same manner. Of course, the specific models will be different. For example, for the model for Poisson regression is different from the model for logistic regression. However, note that for logistic regression model, the unit level (binary data, not
binomial counts) are used and for Poisson regression model the count data are collected at an aggregate level. 
Our application is on the Nepal Living Standards Survey (NLSS II) and we have binary
data (good health versus poor health) for each individual within a household, and these households are within
wards. Our theory applies to individuals within households or individuals within
wards. We note that the number of members in the $i^{th}, i=1,\ldots,L$, household in the population is $N_i$ and all 
household members are sampled, but not all households in a ward are sampled. Our model will hold for all $L$ households
but it is convenient to present the model for the $\ell \leq L$ sampled households.

Our logistic regression model applies to all $L \ge \ell$ population households. We know 
how many households are in each  ward (sampled or nonsampled) but we do not know how many members are
in each nonsampled household. Let $y_{ij}$ and $\xtil_{ij} =  (1,x_{ij1},\dots,x_{ijp-1} )^T$, denote the responses and the $p$ vector of covariates with an intercept ($x_{ij0}=1$).

A standard hierarchical Bayesian logistic regression model is
	$$
	y_{ij}|\betatil,\nu_i  \stackrel{ind}{\sim} \mbox{Bernoulli} \left\{\frac{e^{\xtil_{ij}'\betatil+\nu_i}}{1+e^{\xtil_{ij}'\betatil+\nu_i}} \right\}, j=1,\ldots,n_i,
	$$
	$$
	\nu_i|\delta^2 \stackrel{iid}{\sim}\mbox{Normal}(0,\delta^2), i=1,\ldots,\ell,
	$$
	$$
	\pi(\betatil,\delta^2) \propto\frac{1}{(1+\delta^2)^2}, \delta^2>0.
	$$
	Here, $\nu_i, i=1,\ldots,\ell$, are the random effects and $\betatil= (\beta_0,\beta_1,\dots,\beta_p)^\prime$ are the regression coefficients with $\delta^2$, the variance of the random effects.	
	
For our approximation methods, we will use an equivalent model.	
It is convenient to separate $\betatil$ into $\beta_0$ and $\betatil_{(0)}$, where $\betatil_{(0)}= (\beta_1, \beta_2, \dots, \beta_p)^\prime$. Omitting the  intercept term from the covariate $\xtil_{ij}$, we
have	
	$$
	y_{ij}|\mu_i,\betatil_{(0)}  \stackrel{ind}{\sim} \mbox{Bernoulli} \left\{\frac{e^{\xtil_{ij}'\betatil_{(0)}+\mu_i}}{1+e^{\xtil_{ij}'\betatil_{(0)}+\mu_i}} \right\}, j=1,\ldots,n_i,
	$$
	$$
	\mu_i|\beta_0,\delta^2 \stackrel{iid}{\sim}\mbox{Normal}(\beta_0,\delta^2), i=1,\ldots,\ell,
	$$
	\begin{equation}
	\label{eq:3}
	\pi(\betatil,\delta^2) \propto\frac{1}{(1+\delta^2)^2}, \delta^2>0,
	\end{equation}
	where essentially we have made the transformation $\mu_i = \nu_i + \beta_0,
	 i=1,\ldots,\ell$.
	
	The parameters of interest are the household proportions,
	\begin{equation}
	\label{pint}
	P_{i} = \frac{1}{N_i} \sum_{j=1}^{N_i} I_{ij},  ~~I_{ij} \mid \nu_i, \betatil \stackrel{ind} \sim \mbox{Bernoulli}\{
	\frac{e^{\xtil_{ij}'\betatil+\nu_i}}{1+e^{\xtil_{ij}'\betatil+\nu_i}}\}, i=1,\ldots,\ell.
	\end{equation}
	The $P_i$ give smoothed estimates at the household level and actually predictions
	for the nonsample households.
	A similar formula can be written down for the wards. Because we are not linking
	the census to the NLSS, we do not have the covariates and the number of members in
	each nonsampled households, both being obtained using a Bayesian bootstrap (Rubin 1981) of the 
	original samples.
	
	To develop the approximate methodology, we will work with the no-intercept model. Then, using Bayes' theorem, the joint posterior density for the parameters is
	 \begin{equation}
	 \label{eq:4}
	\pi(\mutil,\betatil,\delta^2|\ytil)\\
	 \propto   \left\{ \prod_{i=1}^\ell  [\prod_{j=1}^{n_i}\frac{e^{(\xtil_{ij}'\betatil_{(0)}+\mu_i)y_{ij}}}{1+e^{\xtil_{ij}'\betatil_{(0)}+\mu_i}} ]
	 [\frac{1}{\sqrt{2\pi\delta^2}}e^{-\frac{(\mu_i-\beta_0)^2}{2\delta^2}} ] \right\}
	\frac{1}{(1+\delta^2)^2}.
	\end{equation}
	The posterior density in (\ref{eq:4}) is a non-standard density, and there are difficulties
	in fitting it using Markov chain Monte Carlo methods, more so when $\ell$ is large. This
	motivates our approximate methods.
	
	We note that given $\betatil,\delta^2,\ytil$, the $\mu_i$ are independent with
	 \begin{equation}
	 \label{eq:cpdv}
	\pi(\mu_i \mid \betatil,\delta^2,\ytil) \propto
	  [\prod_{j=1}^{n_i}\frac{e^{(\xtil_{ij}'\betatil_{(0)}+\mu_i)y_{ij}}}{1+e^{\xtil_{ij}'\betatil_{(0)}+\mu_i}} ]
	 [\frac{1}{\sqrt{2\pi\delta^2}}e^{-\frac{(\mu_i-\beta_0)^2}{2\delta^2}} ], i=1,\ldots,\ell.
	\end{equation}
	For the nonsampled households, $\mu_i \mid \beta_0, \delta^2 \stackrel{iid} \sim \mbox{Normal}(\beta_0, \delta^2), i=1,\ldots,\ell$.
	More importantly with respect to posterior inference about $P_i$, the conditional posterior densities for the $\nu_i$  can be written down easily.

\subsection{Approximation to the Posterior Density}

   In this section we discuss the approximation to the joint posterior density
	in (\ref{eq:4}).

   Let $f(\tautil) = e^{h(\tautil)}$ denote the density of a vector of parameters
	$\tautil$. Let $\gtil$ denote the gradient vector and $H$ the Hessian matrix
	at some point $\tautil^\ast$.

\begin{lemma}
\emph{	Let $h(\tautil)$ be a logconcave density function with the parameter $\tautil$.
	Then, $\tautil$ approximately has a multivariate normal distribution, 
	$$
	\tautil\stackrel{}{\sim}\mbox{Normal} (\tautil^*-H^{-1}\gtil,-H^{-1} ).
	$$}
\end{lemma}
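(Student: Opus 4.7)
The plan is to do a standard second-order Taylor expansion of $h(\tautil)$ about the point $\tautil^{\ast}$, but, unlike the usual Laplace argument, \emph{not} to assume that $\tautil^{\ast}$ is the mode. Because the gradient $\gtil = \nabla h(\tautil^{\ast})$ is in general nonzero, the linear term must be retained, and this is exactly what will shift the mean of the resulting normal approximation away from $\tautil^{\ast}$ to $\tautil^{\ast} - H^{-1}\gtil$. The logconcavity of $h$ is what makes the whole argument respectable: it guarantees that the Hessian $H$ at $\tautil^{\ast}$ is negative definite, so that $-H^{-1}$ is a bona fide positive definite covariance matrix.

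First I would write
\begin{equation*}
h(\tautil) \;\approx\; h(\tautil^{\ast}) + \gtil^{\top}(\tautil - \tautil^{\ast}) + \tfrac{1}{2}(\tautil - \tautil^{\ast})^{\top} H (\tautil - \tautil^{\ast}),
\end{equation*}
exponentiate to obtain $f(\tautil) = e^{h(\tautil)}$ up to a constant that does not depend on $\tautil$, and then complete the square in $\tautil$. Setting $\mutil^{\ast} = \tautil^{\ast} - H^{-1}\gtil$, a direct expansion gives
\begin{equation*}
\tfrac{1}{2}(\tautil - \mutil^{\ast})^{\top} H (\tautil - \mutil^{\ast}) \;=\; \tfrac{1}{2}(\tautil - \tautil^{\ast})^{\top} H (\tautil - \tautil^{\ast}) + \gtil^{\top}(\tautil - \tautil^{\ast}) + \tfrac{1}{2}\gtil^{\top} H^{-1}\gtil,
\end{equation*}
so the linear and quadratic pieces of the Taylor expansion combine into a single quadratic form in $\tautil - \mutil^{\ast}$, up to the additive constant $-\tfrac{1}{2}\gtil^{\top} H^{-1}\gtil$ that can be absorbed into the normalizing constant.

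Absorbing all $\tautil$-free factors, the approximation becomes
\begin{equation*}
f(\tautil) \;\propto\; \exp\!\left\{ -\tfrac{1}{2}(\tautil - \mutil^{\ast})^{\top} (-H) (\tautil - \mutil^{\ast}) \right\},
\end{equation*}
which is exactly the kernel of the $\mbox{Normal}(\tautil^{\ast} - H^{-1}\gtil,\, -H^{-1})$ density. This is the claim. I would close by remarking explicitly that logconcavity is used only to ensure that $-H$ is positive definite, so that $-H^{-1}$ qualifies as a covariance matrix and the Gaussian kernel integrates to a finite value.

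The main obstacle here is really bookkeeping rather than analysis: one has to keep careful track of signs and of the cross term $\gtil^{\top}(\tautil - \tautil^{\ast})$ that forces the mean shift. The only conceptual subtlety is that, because $\tautil^{\ast}$ is not the mode, the quality of the approximation depends on how close $\tautil^{\ast}$ is to the true mode; in practice, as the authors have indicated, $\tautil^{\ast}$ will be chosen via the empirical logistic transform so that $\gtil$ is small but not necessarily zero, and the approximation remains accurate to second order regardless.
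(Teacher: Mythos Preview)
Your proposal is correct and follows essentially the same approach as the paper: a second-order Taylor expansion of $h(\tautil)$ about $\tautil^{\ast}$, retaining the gradient term, with logconcavity invoked to guarantee that $-H$ is positive definite. In fact your version is more complete than the paper's, which simply writes down the Taylor expansion and asserts the conclusion; you explicitly carry out the completing-the-square step that identifies the shifted mean $\tautil^{\ast} - H^{-1}\gtil$.
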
	
\begin{proof}
	
	Simply applying  the second-order multivariate Taylor series of $h(\tautil)$ at 
	$\tautil^*$, we have
	$$
	f(\tautil)\approx f(\tautil^*)+(\tautil-\tautil^*)'\gtil+\frac{1}{2}(\tautil-\tautil^*)'H(\tautil-\tautil^*).
	$$
	
We remark that because of logconcavity, $-H$ is positive definite. Also because we are not required to find the mode of $h(\tautil)$, 
$\tautil^\ast$ does not have to be the solution of the gradient vector set to the zero vector. So that the	term involving $\gtil$ is a correction to $\tautil^\ast$.
\end{proof}	

%\subsubsection{The Multivariate Normal Approximation Theorem}

Momentarily, we consider a flat prior $\betatil_{(0)}^\ast$ and the $\mu_i$ (i.e., fiducial inference). That is,
		$$
		y_{ij}|\mu_i,\betatil_{(0)} \stackrel{ind}{\sim} \mbox{Bernoulli} \left\{\frac{e^{\xtil_{ij}'\betatil_{(0)}+\mu_i}}{1+e^{\xtil_{ij}'\betatil_{(0)}+\mu_i}} \right\}, j=1,\ldots,n_i, i=1,\ldots,\ell,
		$$
	$$
	p(\mutil, \betatil_{(0)}) = 1.
	$$
	The joint posterior density is
 \begin{equation}
	 \label{apod}
	\pi(\mutil,\betatil|\ytil)\\
	 \propto\prod_{i=1}^\ell \left\{ \prod_{j=1}^{n_i}\frac{e^{(\xtil_{ij}'\betatil_{(0)}+\mu_i)y_{ij}}}{1+e^{\xtil_{ij}'\betatil_{(0)}+\mu_i}} \right\}.
	\end{equation}
   The logarithm of the joint posterior density (or log-likelihood) is
	$$
	\Delta=h(\tautil)=\sum_{i=1}^\ell \sum_{j=1}^{n_i}\{ (\xtil_{ij}'\betatil_{(0)}+\mu_i)y_{ij} - 
	\log(1+e^{\xtil_{ij}'\betatil_{(0)}+\mu_i})\}.
	$$
  Let $\tautil^\prime=(\mutil^\prime, \betatil_{(0)}^\prime)$. First, we find a 
	convenient point to expand the log-likelihood in a multivariate Taylor's series
	expansion. In Appendix B, we show how to obtain quasi-modes for $\betatil_{(0)}$ and
		$\mu_i, i=1,\ldots,\ell$, of the log-likelihood function.
	
%%%%%%%%%%%%%%%%%%%%%%%%%%%%%%%%%%%%%%%%%%%%%%%%%%%%%%%%%%%%%%%%%%%%%%%%%%%%%%%%%%%%%%%%%%%%%%%%%%%%%%%%%%%		
		First, we use the empirical logistic transform $z_i$ to get an estimate of $\mu_i$, where
		$$z_i=\log \left\{\frac{\bar{y}_i+\frac{1}{2n_i}}{1-\bar{y}_i+\frac{1}{2n_i}} \right\}, i=1,\ldots,\ell.$$	
		See Appendix C.
		\par	
		Second, obtain the first derivative of log-likelihood of $\betatil_{(0)}$, use  a first-order Taylor's expansion  
		with $\mu_i$ replaced by $z_i$, and set to zero  to get		
$$
		\betatil_{(0)}^*= \left[ \sum_{i=1}^\ell\sum_{j=1}^{n_i}\xtil_{ij}\xtil_{ij}' \right] ^{-1} \left[ \sum_{i=1}^\ell\sum_{j=1}^{n_i}\xtil_{ij}(y_{ij}-z_i) \right] .
$$ \par
Third, we obtain quasi-modes for the $\mu_i$, a refinement of the $z_i$. We use the log-likelihood of the $\mu_i$ with
the regression coefficients replaced by $\betatil_{(0)}^\ast$, and solve its first derivative for zeros using a first-order
Taylor's series expansion to get
$$
		{\mu_i}^* = \log \left[ \frac{\frac{1}{n_i}\sum _{j=1}^{n_i}e^{-\xtil_{ij}'\betatil_{(0)}^*}}{(1-\bar{y}_i+\frac{1}{2n_i})}\right] , i=1,\ldots,\ell.
$$
%%%%%%%%%%%%%%%%%%%%%%%%%%%%%%%%%%%%%%%%%%%%%%%%%%%%%%%%%%%%%%%%%%%%%%%%%%%%%%%%%%%%%%%%%%%%%%%%%%%%%%%%%%%%%%%%%%%%%%%%%%%%%%%		
Let $\tautil^{\ast \prime}=(\mutil^{\ast \prime}, \betatil_{(0)}^{\ast \prime})$.

Next, we evaluate     				
		$\gtil$ and H at the quasi-modes $\tautil=\tautil^*$,
		$$
		 \gtil=\left(\begin{array}{r}
		\frac{\partial\Delta}{\partial{\mu_1}} \quad\cdots \quad\frac{\partial\Delta}{\partial{\mu_\ell}} \quad\frac{\partial\Delta}{\partial\betatil_{(0)}}
		\end{array}\right)
		^{T}_{\mutil=\mutil^*,\;\betatil_{(0)}=\betatil_{(0)}^*},
		$$
		$$
		H=\begin{pmatrix}
		\frac{\partial^2\Delta}{\partial{\mu_1}^2}&\cdots &0 &\frac{\partial^2\Delta}{\partial{\mu_1}\partial\betatil_{(0)}} \\
		\vdots & \ddots & \vdots & \vdots\\
		0&\cdots &\frac{\partial^2\Delta}{\partial{\mu^2_{\ell}}} &\frac{\partial^2\Delta}{\partial{\mu_{\ell}}\partial\betatil_{(0)}} \\
		\frac{\partial^2\Delta}{\partial{\mu_1}\partial\betatil_{(0)}}&\cdots&\frac{\partial^2\Delta}{\partial{\mu_{\ell}}\partial\betatil_{(0)}} &\frac{\partial^2\Delta}{\partial\betatil^2_{(0)}}
		\end{pmatrix}_{\mutil=\mutil^*,\betatil_{(0)}=\betatil_{(0)}^*}.
		$$
		The partial derivatives can be expressed in terms of response $y_{ij}$ and covariates $\xtil_{ij}$ as
		$$ 
		 \frac{\partial\Delta}{\partial\betatil_{(0)}}=\sum_{i=1}^\ell\sum_{j=1}^{n_i} ( \xtil_{ij} y_{ij}-\frac{\xtil_{ij} e^{\xtil_{ij}'\betatil_{(0)}+\mu_i}}{1+e^{\xtil_{ij}'\betatil_{(0)}+\mu_i}} )
		 $$
		 $$
		 \frac{\partial\Delta}{\partial{\mu_i}}=\sum_{j=1}^{n_i} (y_{ij}-
		 \frac{e^{\xtil_{ij}'\betatil_{(0)}+\mu_i}}{1+e^{\xtil_{ij}'\betatil_{(0)}+\mu_i}} ), i=1,\ldots,\ell,
		$$
		$$ \frac{\partial^2\Delta}{\partial\betatil_{(0)}^2}=-\sum_{i=1}^\ell\sum_{j=1}^{n_i} \frac{\xtil_{ij}\xtil_{ij}'e^{\xtil_{ij}'\betatil_{(0)}+\mu_i}}{(1+e^{\xtil_{ij}'\betatil_{(0)}+\mu_i})^2},
		$$
		$$		 \frac{\partial^2\Delta}{\partial{\mu_i}^2}=-\sum_{j=1}^{n_i}\frac{e^{\xtil_{ij}'\betatil_{(0)}+\mu_i}}
		{(1+e^{\xtil_{ij}'\betatil_{(0)}+\mu_i})^2}, i=1,\ldots,\ell,
		$$
		$$
				 \frac{\partial^2\Delta}{\partial{\mu_i}\partial\mu_{i^\prime}}=0, i \ne i^\prime =1,\ldots,\ell,
		 ~~\frac{\partial^2\Delta}{\partial{\mu_i}\partial\betatil_{(0)}}=-\sum_{j=1}^{n_i}\frac{\xtil_{ij} e^{\xtil_{ij}'\betatil_{(0)}+\mu_i}}{(1+e^{\xtil_{ij}'\betatil_{(0)}+\mu_i})^2}, i=1,\ldots,\ell.
		 $$
		 
		For the convenience of computation, denote $\gtil=\left(\begin{array}{c} \gtil_1\\\gtil_2\end{array}\right)$ and $H=-\left(\begin{array}{c}
		D\quad C'\\C \quad B\end{array}\right),$ where
		$$
		 \gtil_1=\left(\begin{array}{r}
		\frac{\partial\Delta}{\partial{\mu_1}} \cdots \frac{\partial\Delta}{\partial{\mu_\ell}}\end{array}
		\right)^T,
		\gtil_2=\frac{\partial\Delta}{\partial\betatil_{(0)}},
		$$
		$$
		 B=-\frac{\partial^2\Delta}{\partial\betatil_{(0)}^2},
		C=-\left(\begin{array}{r}
		\frac{\partial^2\Delta}{\partial{\mu_1}\partial\betatil_{(0)}}\quad \cdots\quad \frac{\partial^2\Delta}{\partial{\mu_\ell}\partial\betatil_{(0)}}\end{array}\right),
		D=-\left(\begin{array}{r}
		\frac{\partial^2\Delta}{\partial{\mu_1}^2} \quad\cdots \quad0\\
		\colon \quad\ddots \quad\colon\\
		0 \quad\cdots \quad\frac{\partial^2\Delta}{\partial{\mu_\ell}^2}\\
		\end{array}\right).
		$$
		Note that D is a diagonal matrix. 
		$$ \mbox{Let}~
		-H^{-1}=\left(\begin{array}{c}
		D \quad C'\\C \quad B
		\end{array}\right)^{-1}=\left(\begin{array}{c}
		E \quad F'\\F \quad G
		\end{array}\right), ~\mbox{where} $$
		 $$E=D^{-1}+D^{-1}C'(B-CD^{-1}C')^{-1}CD^{-1},
		 F=-(B-CD^{-1}C')^{-1}CD^{-1},
		 G=(B-CD^{-1}C')^{-1}.$$

\begin{lemma}
	
  \emph{ Assuming that the design matrix is full-rank and $0<\sum_{j=1}^{n_i} y_{ij} <n_i, i=1,\ldots,\ell$, the posterior density, $\tautil|\ytil$ in (\ref{apod}), is logconcave.}
\end{lemma}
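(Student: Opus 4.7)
The plan is to show directly that the logarithm of the unnormalized posterior in (\ref{apod}) is a concave function of $\tautil=(\mutil', \betatil_{(0)}')'$, since logconcavity of a density is by definition concavity of its logarithm. Writing $u_{ij} = \xtil_{ij}'\betatil_{(0)} + \mu_i$, the log-posterior decomposes cleanly as
\[
h(\tautil) \;=\; \sum_{i=1}^\ell \sum_{j=1}^{n_i}\bigl\{y_{ij} u_{ij} \;-\; \log(1+e^{u_{ij}})\bigr\},
\]
which splits the analysis into two additive pieces.

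First I would handle the linear part: $y_{ij} u_{ij}$ is affine in $(\mutil, \betatil_{(0)})$, hence concave. Second, for the softplus term $\varphi(u) = \log(1+e^u)$, a direct computation gives $\varphi''(u) = e^u/(1+e^u)^2 > 0$, so $\varphi$ is strictly convex on $\mathbb{R}$. Because $u_{ij}$ is an affine function of $\tautil$, the composition $\varphi(u_{ij})$ is convex in $\tautil$ (convexity is preserved under affine precomposition), so $-\log(1+e^{u_{ij}})$ is concave in $\tautil$. Summing over $(i,j)$, the log-posterior is a sum of concave functions, which establishes logconcavity.

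To obtain strict logconcavity -- which is what justifies the remark after Lemma 2.1 that $-H$ is positive definite -- I would use the second derivatives already displayed in the text to write
\[
-H \;=\; \sum_{i=1}^\ell \sum_{j=1}^{n_i} w_{ij}\, a_{ij} a_{ij}', \qquad w_{ij} = \frac{e^{u_{ij}}}{(1+e^{u_{ij}})^2} \;>\; 0,
\]
where $a_{ij}$ is the $(\ell+p)$-vector with a $1$ in position $i$ of the $\mutil$-block and $\xtil_{ij}$ in the $\betatil_{(0)}$-block. Since every weight $w_{ij}$ is strictly positive, $-H$ is a positive combination of rank-one PSD matrices, and strict positive definiteness is equivalent to the $\{a_{ij}\}$ spanning $\mathbb{R}^{\ell+p}$. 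For any null direction $v = (v_\mu, v_\beta)'$, the relation $v'a_{ij}=0$ forces $\xtil_{ij}'v_\beta = -v_{\mu,i}$ to be constant in $j$ within each group $i$; the full-rank hypothesis on the augmented design matrix (group indicators stacked with the covariate matrix) then forces $v_\beta = 0$ and subsequently $v_\mu = 0$.

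The main obstacle is not really computational --- convexity of the softplus makes the concavity argument essentially immediate --- but conceptual: clarifying the role of each hypothesis. The condition $0 < \sum_j y_{ij} < n_i$ does not enter the Hessian at all (the data cancels out of the second derivatives), so it is not needed for concavity; rather, it is the standard condition that prevents the MLE from escaping to infinity and, combined with the flat prior on $(\mutil,\betatil_{(0)})$, guarantees that (\ref{apod}) is proper --- i.e., that calling it a density is legitimate. The full-rank assumption is what upgrades concavity to strict concavity via the spanning argument above. Articulating these two roles cleanly, and stating precisely what "full-rank" must mean in the presence of the random intercepts $\mu_i$, is the only delicate aspect of the proof.
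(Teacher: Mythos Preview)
Your proof is correct but takes a genuinely different route from the paper's. The paper works entirely at the level of the Hessian: it writes $-H$ in its $D$, $C$, $B$ block form, observes that $D$ is diagonal with positive entries, and then shows the Schur complement $S=B-CD^{-1}C'$ is positive semidefinite by rewriting it as a weighted within-group dispersion matrix,
\[
S=\sum_{i=1}^\ell \Bigl(\sum_{j=1}^{n_i} p_{ij}(1-p_{ij})\Bigr)\sum_{j=1}^{n_i}\omega_{ij}\bigl(\xtil_{ij}-\bar{\xtil}_i^{\,\omega}\bigr)\bigl(\xtil_{ij}-\bar{\xtil}_i^{\,\omega}\bigr)',
\]
and then invokes the full-rank assumption to conclude $-H\succ 0$. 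Your argument bypasses the block decomposition entirely: concavity follows immediately from strict convexity of the softplus composed with affine maps, and strict concavity from the spanning argument on the vectors $a_{ij}=(e_i',\xtil_{ij}')'$. Both arguments are valid and ultimately rest on the same identifiability condition, but yours is more elementary and makes the logic transparent, while the paper's Schur-complement calculation has the side benefit of producing the explicit block inverse $(E,F,G)$ that is reused downstream in Theorem~2.1. You are also right that the hypothesis $0<\sum_j y_{ij}<n_i$ plays no role in the Hessian (the $y_{ij}$ vanish from second derivatives); the paper uses it only to remark that the gradient equations have a solution, i.e., that the posterior in (\ref{apod}) has an interior mode and is proper under the flat prior.
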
	
	
\begin{proof}
If $0<\sum_{j=1}^{n_i} y_{ij} <n_i, i=1,\ldots,\ell$, there are solutions to the gradient vector set
to zero. 

Let $p_{ij} =  \frac{e^{\xtil_{ij}'\betatil_{(0)}+\mu_i}}{1+e^{\xtil_{ij}'\betatil_{(0)}+\mu_i}}, j=1,\ldots,n_i, i=1,\ldots,\ell$. Then,
$A$, $B$ and $C$ of the  negative Hessian matrix can be written as,
	
		$$ B = \frac{\partial^2\Delta}{\partial\betatil_{(0)}^2}=\sum_{i=1}^\ell\sum_{j=1}^{n_i} 
		p_{ij}(1-p_{ij})\xtil_{ij}\xtil_{ij}',
		$$
		$$	
			D = \mbox{diagonal}(d_i, i=1,\ldots,\ell), d_i = \frac{\partial^2\Delta}{\partial{\mu_i}^2}
			  =\sum_{j=1}^{n_i}p_{ij}(1-p_{ij}), i=1,\ldots,\ell
		$$
		$$
		C = (\mathunderaccent\tilde{c}_i),
		\mathunderaccent\tilde{c}_i		=
		 ~~\frac{\partial^2\Delta}{\partial{\mu_i}\partial\betatil_{(0)}}=\sum_{j=1}^{n_i} p_{ij}(1-p_{ij})\xtil_{ij}, i=1,\ldots,\ell.
		$$
 
     Clearly, $D$ is positive definite. Thus, to show that $-H$ is positive definite, we show that
		its Schur complement of $D$, $S = B-C D^{-1} C^\prime$, is positive definite (e.g., see Boyd 
		and Vandenberghe 2004). Let $\omega_{ij} = p_{ij}(1-p_{ij})/\sum_{j=1}^{n_i} \{p_{ij}(1-p_{ij})\},
		j=1,\ldots,n_i, i=1,\ldots,\ell$. Then, the Schur complement is
		$$
		S = \sum_{i=1}^\ell \sum_{j=1}^{n_i} p_{ij}(1-p_{ij}) \sum_{j=1}^{n_i} \omega_{ij} 
		\xtil_{ij} \xtil_{ij}^\prime - \sum_{i}^\ell \sum_{j=1}^{n_i} p_{ij}(1-p_{ij}) 
		\sum_{j=1}^{n_i}   \omega_{ij} \xtil_{ij}    \sum_{j=1}^{n_i}  \sum_{j=1}^{n_i} 
		\omega_{ij} \xtil_{ij}^\prime.
		$$ 
		It is now easy to show that
		$$
		S = \sum_{i=1}^\ell \sum_{j=1}^{n_i} \omega_{ij}(\xtil_{ij}-\sum_{j=1}^{n_i} \omega_{ij} \xtil_{ij})
		(\xtil_{ij}-\sum_{j=1}^{n_i} \omega_{ij} \xtil_{ij})^\prime.
		$$
		Therefore,  $-H$ is positive definite, and $\tautil|\ytil$ is logconcave.
\end{proof}		

    We next establish the first key result presented in Theorem 2.1.
 \begin{theorem}
 \emph{Assuming that the design matrix is full-rank and $0<\sum_{j=1}^{n_i} y_{ij} <n_i$, the posterior density, $\tautil|\ytil$ in (\ref{apod}) is approximately
   a multivariate normal density, and
   $$
		 \mutil|\betatil_{(0)},\ytil\stackrel{}{\sim}
		\mbox{Normal}\{\mutil_{\mu}-D^{-1}C'(\betatil_{(0)}-\mutil_{\beta}),D^{-1}\} ~\mbox{and}~
			\betatil_{(0)}|\ytil\stackrel{}{\sim}\mbox{Normal}\{\mutil_{\beta},G\},
		$$ 
where
$$
\mutil_{\mu} = \mutil^*+E\gtil_1+F'\gtil_2   ~\mbox{and}~ \mutil_{\beta} = \betatil_{(0)}^*+F\gtil_1+G\gtil_2.
$$}
\end{theorem}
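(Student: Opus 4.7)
The plan is to invoke Lemma 2.1 directly on the joint posterior density in (\ref{apod}) and then extract the stated marginal and conditional by a block decomposition of the resulting Gaussian. By Lemma 2.2 (under the full-rank and strict-positivity hypotheses), the log-posterior is log-concave, so the hypothesis of Lemma 2.1 is met at the quasi-mode $\tautil^{\ast\prime} = (\mutil^{\ast\prime}, \betatil_{(0)}^{\ast\prime})$ constructed in Appendix B. Because $\tautil^\ast$ is only an approximate mode, the gradient $\gtil$ does not vanish and is carried through as a correction; Lemma 2.1 then gives
$$
\tautil \mid \ytil \;\stackrel{}{\sim}\; \mbox{Normal}(\tautil^\ast - H^{-1}\gtil,\; -H^{-1}),
$$
which already establishes the joint multivariate normality claim in the theorem.

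Next I would compute the mean in block form. Writing $\gtil = (\gtil_1', \gtil_2')'$ and using the block representation $-H^{-1} = \bigl(\begin{smallmatrix} E & F' \\ F & G \end{smallmatrix}\bigr)$ already established in the preamble, a direct multiplication yields $-H^{-1}\gtil = (E\gtil_1 + F'\gtil_2,\; F\gtil_1 + G\gtil_2)'$, so the joint mean is exactly $(\mutil_\mu', \mutil_\beta')'$ with $\mutil_\mu$ and $\mutil_\beta$ as defined in the statement. The marginal of $\betatil_{(0)}$ is then read off immediately: mean $\mutil_\beta$ and covariance equal to the bottom-right block $G$.

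The cleanest way to obtain the conditional of $\mutil \mid \betatil_{(0)}, \ytil$ is to work in the precision parameterisation, noting that the precision matrix of the approximating Gaussian is simply $-H = \bigl(\begin{smallmatrix} D & C' \\ C & B \end{smallmatrix}\bigr)$. The standard conditioning formula in the precision form then gives conditional covariance equal to the inverse of the top-left block, namely $D^{-1}$, and conditional mean $\mutil_\mu - D^{-1}C'(\betatil_{(0)} - \mutil_\beta)$, matching the statement verbatim. One could equivalently verify the Schur-complement identities $F'G^{-1} = -D^{-1}C'$ and $E - F'G^{-1}F = D^{-1}$ starting from the defining formulas $F = -(B - CD^{-1}C')^{-1}CD^{-1}$ and $G = (B - CD^{-1}C')^{-1}$; this recovers the covariance-form of the same conditional.

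The only real obstacle is bookkeeping in the block algebra; all inverses used exist because Lemma 2.2, together with the full-rank assumption, shows that $D$, $B$, and the Schur complement $B - CD^{-1}C'$ are each positive definite. No further analytic approximation is needed beyond that already supplied by Lemma 2.1, so the proof reduces entirely to a finite-dimensional linear-algebra computation that partitions the mean vector and the covariance (or precision) matrix consistently with the partition $\tautil = (\mutil, \betatil_{(0)})$.
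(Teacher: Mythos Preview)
Your proposal is correct and follows essentially the same route as the paper: invoke Lemma~2.2 for logconcavity, apply Lemma~2.1 at the quasi-mode $\tautil^\ast$ to obtain the joint Gaussian with mean $\tautil^\ast - H^{-1}\gtil$ and covariance $-H^{-1}$, compute the mean in block form to identify $(\mutil_\mu,\mutil_\beta)$, and then read off the marginal and conditional using standard multivariate normal properties. The only difference is cosmetic: you spell out the precision-form conditioning (and the equivalent Schur identities), whereas the paper simply cites ``a standard property of the multivariate normal density.''
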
 
 
\begin{proof}
		
		By Lemma 2.2 the posterior density is logconcave. By Lemma 2.1 the posterior density is approximately
		a multivariate normal density. We provide the approximate mean  and variance to completely specify the
		multivariate normal density.
		
		By Lemma 2.1, evaluating all appropriate quantities at $\tautil^\ast$, the posterior mean is
		$$ 
		\left(\begin{array}{c}\mutil_{\mu}\\\mutil_{\beta}\end{array}\right) = 	\tautil^*-H^{-1}\gtil	
		=\left(\begin{array}{c}
		\mutil^*\\\betatil_{(0)}^*
		\end{array}\right)+\left(\begin{array}{c}E\quad F'\\F\quad G\end{array}\right)
		\left(\begin{array}{c}\gtil_1\\\gtil_2\end{array}\right)
		=\left(\begin{array}{c}\mutil^*+E\gtil_1+F'\gtil_2\\\betatil_{(0)}^*+F\gtil_1+G\gtil_2\end{array}\right).
		 $$
     Also, by Lemma 1, the posterior variance is 		
		$$
		-H^{-1}=\left(\begin{array}{c}
		D \quad C'\\C \quad B
		\end{array}\right)^{-1}=\left(\begin{array}{c}
		E \quad F'\\F \quad G
		\end{array}\right).
		$$
		That is, the approximate joint posterior density is
		$$
		\left(\begin{array}{c}
		\mutil\\\betatil_{(0)}
		\end{array}\right)|\ytil\stackrel{}{\sim} \mbox{Normal} \left\{\left(\begin{array}{c}
		\mutil_{\mu}\\\mutil_{\beta}
		\end{array}\right),  \left(\begin{array}{c}
		E \quad F'\\F \quad G
		\end{array}\right) \right\}.
		$$

		Finally, using a standard property of the multivariate normal density,
		it follows that approximately,
		$$
		 \mutil|\betatil_{(0)},\ytil\stackrel{}{\sim}
		\mbox{Normal}\{\mutil_{\mu}-D^{-1}C'(\betatil_{(0)}-\mutil_{\beta}),D^{-1}\}
		~\mbox{and}~
			\betatil_{(0)}|\ytil\stackrel{}{\sim}\mbox{Normal}\{\mutil_{\beta},G\}.
		$$
\end{proof}
\subsection{Integrated Nested Normal Approximation} 
	
	In this section, we obtain the integrated nested normal approximation (INNA).
 INNA, which does not require posterior modes, is competitive to the integrated nested
 Laplace approximation (INLA) that requires posterior modes.

  Next, using the normal priors for the $\mu_i$ and Theorem 1, we have the following approximate hierarchical Bayesian regression model,
	$$
	\mutil|\betatil_{(0)},\ytil\stackrel{}{\sim}\mbox{Normal}\{\mutil_{\mu}-D^{-1}C'(\betatil_{(0)}-\mutil_{\beta}),D^{-1}\}
	 $$
	$$
	\betatil_{(0)}|\ytil\stackrel{}{\sim}\mbox{Normal}\{\mutil_{\beta},G\}
	 $$
	$$
	\mutil|\beta_0,\delta^2\sim{\mbox{Normal}\{\beta_0\jtil,\delta^2I\}}
	 $$
	\begin{equation}
	\label{amod}
	\pi(\beta_0,\betatil_{(0)},\delta^2)\propto\frac{1}{(1+\delta^2)^2}, \delta^2>0,
	\end{equation}
	where $\jtil$ is a vector of ones.
	
	Then, using Bayes' theorem again, the approximate joint posterior density for the parameters $\mutil,\betatil$ and $\delta^2$ is
	$$
\pi_a(\mutil,\betatil,\delta^2|\ytil) \propto
e^{-\frac{1}{2}\left\{\left[\mutil-\left(\mutil_{\mu}-D^{-1}C'(\betatil_{(0)}-\mutil_{\beta})\right)\right]'D\left[\mutil-\left(\mutil_{\mu}-D^{-1}C'(\betatil_{(0)}-\mutil_{\beta})\right)\right]+\left[\mutil-\beta_0\jtil\right]'(\delta^2I)^{-1}\left[\mutil-\beta_0\jtil\right]\right\}}
	$$
	\begin{equation}
	\label{eq:14}
		\times\frac{|D|^{1/2}}{{|\delta^2I|}^{1/2}|G|^{1/2}}\times\frac{1}{(1+\delta^2)^2}\times e^{-\frac{1}{2}\left[\betatil_{(0)}-\mutil_{\beta}\right]'G^{-1}\left[\betatil_{(0)}-\mutil_{\beta}\right]}.
	\end{equation}
	
	Next, we state the main result of the paper in Theorem 2.2.
	
\begin{theorem}
  \emph{	Using the multiplication rule, the joint posterior density, 
	$\pi(\mutil,\betatil,\delta^2 \mid \ytil)$ in (\ref{eq:14}), can be approximated by
	$$
	\pi_a(\mutil,\betatil,\delta^2 \mid \ytil) = \pi_a(\mutil \mid \betatil,\delta^2, \ytil)
	\pi_a(\betatil \mid \delta^2, \ytil)
	\pi_a(\delta^2 \mid \ytil),
	$$
	where the three densities on the right-hand side are to be determined.}
\end{theorem}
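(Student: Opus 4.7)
The plan is to invoke the multiplication rule directly: since a joint density can always be factored as a product of conditionals, I only need to exhibit the three right-hand densities in closed form. I will do this by sequentially completing the square in $\mutil$, then in $\betatil$, reading off a (multivariate) normal at each stage from the resulting quadratic form and absorbing the leftover normalization into the next factor.

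First, I would collect the $\mutil$-dependent terms in the exponent of (\ref{eq:14}). Two quadratic forms involve $\mutil$: the term $[\mutil - (\mutil_\mu - D^{-1}C^\prime(\betatil_{(0)} - \mutil_\beta))]^\prime D [\mutil - (\mutil_\mu - D^{-1}C^\prime(\betatil_{(0)} - \mutil_\beta))]$ inherited from Theorem 2.1, and $[\mutil - \beta_0 \jtil]^\prime (\delta^2 I)^{-1}[\mutil - \beta_0 \jtil]$ from the random-effects prior. Since $D$ is diagonal, adding these produces a quadratic in $\mutil$ with diagonal precision $D + \delta^{-2} I$ and a mean obtained by weighted averaging of the two centers. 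Completing the square reads off $\pi_a(\mutil \mid \betatil, \delta^2, \ytil)$ as the corresponding multivariate normal density, and leaves behind a determinant factor $|D+\delta^{-2}I|^{-1/2}$ together with a residual quadratic that no longer depends on $\mutil$.

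Second, I would collect all remaining terms in $\betatil = (\beta_0, \betatil_{(0)})^\prime$. These come from three sources: the quadratic residue of the $\mutil$ completion (quadratic in both $\beta_0$ and $\betatil_{(0)}$ through $a$ and $\beta_0\jtil$), the Gaussian term $[\betatil_{(0)} - \mutil_\beta]^\prime G^{-1} [\betatil_{(0)} - \mutil_\beta]$ from Theorem 2.1, and the flat prior on $\beta_0$. Combining these, the precision matrix of $\betatil$ has a $(\beta_0,\beta_0)$ block contributed by the random-effects prior through $\jtil^\prime(\delta^2 I+D^{-1})^{-1}\jtil$, a $(\betatil_{(0)},\betatil_{(0)})$ block combining $G^{-1}$ with a contribution from $C D^{-1}$-type terms, and off-diagonal blocks linking $\beta_0$ to $\betatil_{(0)}$ through $\jtil^\prime$ and $D^{-1}C^\prime$. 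Completing the square in $\betatil$ yields $\pi_a(\betatil \mid \delta^2, \ytil)$ as another multivariate normal density, with explicit mean and covariance in terms of $D$, $C$, $G$, $\mutil_\mu$, $\mutil_\beta$, $\jtil$, and $\delta^2$.

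Finally, what remains are only $\delta^2$-dependent scalars: the determinant $|D+\delta^{-2}I|^{-1/2}$ from the first completion, $|\delta^2 I|^{-1/2}$ from the prior, the determinant from the second completion, the quadratic-form residue after both completions, and the hyperprior factor $(1+\delta^2)^{-2}$. Collecting these gives $\pi_a(\delta^2 \mid \ytil)$ as a univariate density on $\delta^2 > 0$, determined up to its one-dimensional normalizing constant; this can be sampled from via a fine grid. The main obstacle will be the algebraic bookkeeping in the second step, since the precision of $\betatil$ is not block-diagonal in $(\beta_0, \betatil_{(0)})$: $\beta_0$ enters only through the random-effects prior while the approximation from Theorem 2.1 contributes to $\betatil_{(0)}$ alone, so the cross terms and the determinants that migrate into $\pi_a(\delta^2 \mid \ytil)$ must be tracked carefully. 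The first step is essentially routine because $D$ is diagonal, and the third step is automatic once the second is complete.
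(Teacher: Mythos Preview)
Your proposal is correct and follows essentially the same route as the paper: sequentially complete the square in $\mutil$ (yielding a normal with diagonal precision $D+\delta^{-2}I$), then in $\betatil=(\beta_0,\betatil_{(0)})$ (yielding another normal whose precision combines $G^{-1}$, the $CD^{-1}$ terms, and the $\jtil$ contributions exactly as you describe), and collect the leftover determinants and quadratic residues into the univariate $\pi_a(\delta^2\mid\ytil)$. The paper carries out precisely these three steps in the same order, with the only difference being that it writes out the explicit mean and precision blocks at each stage rather than leaving them as ``complete the square and read off.''
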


\begin{proof}
First, it can be shown that		
		\begin{equation}		\mutil|\betatil,\delta^2,\ytil\sim{\mbox{Normal}\left\{(D+\frac{1}{\delta^2}I)^{-1}(D\mutil_{\mu}-C'(\betatil_{(0)}-\mutil_{\beta})+\frac{1}{\delta^2}\beta_0\jtil),(D+\frac{1}{\delta^2}I)^{-1}\right\}},
		\end{equation}
which is $\pi_a(\mutil \mid \betatil,\delta^2, \ytil)$.
Because $(D+\frac{1}{\delta^2}I)$ is diagonal, given $\betatil,\delta^2,\ytil$, the $\mu_i$ are independent. This is an 
important result because  parallel computation can be done for $\mu_i$, which accommodates time-consuming and massive storage challenges in big data analysis. This result holds for the exact conditional posterior density of the 
$\mu_i$.
	
	Second, because $\mutil$ has a multivariate normal distribution, we can integrate out $\mutil$ from the joint posterior density $\pi_a(\mutil, \betatil,\delta^2|\ytil)$ to get the joint posterior density of $\betatil$ and $\delta^2$,
	
	$
	\pi_a(\betatil,\delta^2|\ytil)
	\propto
	\frac{|D|^{1/2}}{{|D+\frac{1}{\delta^2}I|^{1/2}|\delta^2I|}^{1/2}|G|^{1/2}}\times\frac{1}{(1+\delta^2)^2} \times e^{-\frac{1}{2} \left[\betatil_{(0)}-\mutil_{\beta}\right]'G^{-1}\left[\betatil_{(0)}-\mutil_{\beta}\right] }
	$
	\begin{equation}
	\label{eq:16}
	 \times e^{-\frac{1}{2}\left\{\left[\mutil_{\mu}-D^{-1}C'(\betatil_{(0)}-\mutil_{\beta})-\beta_0\jtil\right]'
(D^{-1}+\delta^2I)^{-1}\left[\mutil_{\mu}-D^{-1}C'(\betatil_{(0)}-\mutil_{\beta})-\beta_0\jtil\right]\right\}}
	 \end{equation}
	and
	
	$
	\pi(\betatil|\delta^2,\ytil) \propto e^{-\frac{1}{2} \left[\betatil_{(0)}-\mutil_{\beta}\right]'G^{-1}\left[\betatil_{(0)}-\mutil_{\beta}\right] }
	$
	\begin{equation}
	\label{eq:17}	 \times e^{-\frac{1}{2}\left\{\left[\mutil_{\mu}-D^{-1}C'(\betatil_{(0)}-\mutil_{\beta})-\beta_0\jtil\right]'(D^{-1}+\delta^2I)^{-1}\left[\mutil_{\mu}-D^{-1}C'(\betatil_{(0)}-\mutil_{\beta})-\beta_0\jtil\right]\right\}}.
	 \end{equation}

Let	
		$$
	\Delta_{(0)}=CD^{-1}(D^{-1}+\delta^2I)^{-1}D^{-1}C'+G^{-1},
	$$
	$$
	\delta^2_0=\jtil'(D^{-1}+\delta^2I)^{-1}\jtil,
	$$
	$$
	\gmtil=CD^{-1}(D^{-1}+\delta^2I)^{-1}\jtil,
	 $$
			$$
	\left(\begin{array}{c} \omega_0\\\ometil_{(0)}\end{array}\right)
	=\left(\begin{array}{c}
	\delta^2_0 \quad \gmtil'\\
	\gmtil \quad \Delta_{(0)}
	\end{array}\right)^{-1}
	\left(\begin{array}{c}
	(\mutil_{\mu}+D^{-1}C'\mutil_{\beta})'(D^{-1}+\delta^2I)^{-1}\jtil\\
	(\mutil_{\mu}+D^{-1}C'\mutil_{\beta})'(D^{-1}+\delta^2I)D^{-1}C'+\mutil_{\beta}'G^{-1}
	\end{array}\right).
	 $$
After extensive algebraic manipulation, we can show that
$\betatil|\delta^2,\ytil$ has an approximate multivariate normal density,
	\begin{equation}
		\left(\begin{array}{c}
		\beta_0\\\betatil_{(0)}
		\end{array}\right)|\delta^2,\ytil\sim \mbox{Normal}\left\{
		\left(\begin{array}{c}
		\omega_0\\\ometil_{(0)}
		\end{array}\right),
		\left(\begin{array}{c}
		\delta^2_0 \quad\gmtil'\\
		\gmtil\quad\Delta_{(0)}
		\end{array}\right)^{-1}\right\},
		\label{eq:beta}
	 \end{equation}
	which is denoted by $\pi_a(\betatil \mid\delta^2, \ytil)$.
	
	Third, because the  approximate conditional distribution of $\betatil|\delta^2, \ytil$ is a multivariate normal distribution, we can integrate out $\betatil$ from the joint density of $\pi_a(\betatil,\delta^2 \mid \ytil)$ in (\ref{eq:16}) to get the posterior density,
	$$
	\pi_a(\delta^2|\ytil)
	\propto
	\frac{1}{|\delta^2D+I|^{1/2}}\left|\left(\begin{array}{c}
	\delta^2_0 \quad \gmtil'\\
	\gmtil \quad \Delta_{(0)}
	\end{array}\right)\right|^{-\frac{1}{2}}\times\frac{1}{(1+\delta^2)^2}
	$$
	$$
	\times
	e^{-\frac{1}{2}\left[(\mutil_{\mu}+D^{-1}C'\mutil_{\beta})'(D^{-1}+\delta^2I)^{-1}(\mutil_{\mu}+D^{-1}C'\mutil_{\beta})
		+\mutil_{\beta}'G^{-1}\mutil_{\beta}
		-\wtil'\left(\left(\begin{array}{c}
		\delta^2_0 \quad \gmtil'\\
		\gmtil \quad \Delta_{(0)}
		\end{array}\right)\right)\wtil\right]},
	 $$
	 where $\wtil^\prime = (\omega_0, \ometil_{(0)}^\prime)$.
	
\end{proof}	
	Our INNA method is sampling based and it uses random samples. Samples are obtained by first
	drawing from $\pi_a(\delta^2|\ytil)$, then $\pi_a(\betatil \mid\delta^2, \ytil)$ and finally
	$\pi_a(\mutil \mid \betatil,\delta^2, \ytil)$, the first two densities use standard draws, but
	the third is a nonstandard density.

	Since $0<\delta^2 <\infty$, we make a transformation to $\eta=\frac{1}{1+\delta^2}$ so that $0<\eta <1$. Then,
	posterior density, $\pi(\eta|\ytil)$, is
	
	$
	\pi(\eta|\ytil)\propto
	\left\{\frac{1}{|\delta^2D+I|^{1/2}}\left|\left(\begin{array}{c}
	\delta^2_0 \quad \gmtil'\\
	\gmtil \quad \Delta_{(0)}
	\end{array}\right)\right|^{-\frac{1}{2}}\right\}_{\delta^2=\frac{1-\eta}{\eta}} 
	$
	\begin{equation}
	\label{eq:22}
	\times
	\left\{e^{-\frac{1}{2}\left[(\mutil_{\mu}+D^{-1}C'\mutil_{\beta})'(D^{-1}+\delta^2I)^{-1}(\mutil_{\mu}+D^{-1}C'\mutil_{\beta})
		+\mutil_{\beta}'G^{-1}\mutil_{\beta}
		-\wtil'\left(\left(\begin{array}{c}
		\delta^2_0 \quad \gmtil'\\
		\gmtil \quad \Delta_{(0)}
		\end{array}\right)\right)\wtil\right]}\right\}_{\delta^2=\frac{1-\eta}{\eta}}.
	 \end{equation}
	The grid method is used to sample $\eta$.

	%	{\bf Parameter Sampling}
	%
	%With the full conditional densities for each parameter, we write steps to draw samples.\\
	%(i) Find posterior modes $\delta^{2*}$, $\beta_0^*$ and $\betatil_{(0)}^*$ as the starting values for proposal density of $\betatil$ and $\delta$.\\
	%(ii) Draw $\betatil$ and $\delta^{2}$ given data using Metropolis Hastings sampling with starting values $\delta^{2*}$, $\beta_0^*$ and $\betatil_{(0)}^*$.\\
	%(iii) Draw $\nutil$ given $\betatil$, $\delta^{2}$ and data using Metropolis Hastings sampling. Again, $\nu_i$ are independent and this can also be done by parallel computing as in the IMNA method.
			%
	
	INNA simply uses the multiplication rule to get samples from the approximate joint posterior density, which is
	\begin{equation}
	\label{eq:apos}
	\pi_a(\mutil,\betatil, \delta^2 \mid \ytil) =
	\pi_a(\mutil, \mid \betatil, \delta^2, \ytil)
	\pi_a(\betatil \mid \delta^2, \ytil)
	\pi_a(\delta^2 \mid \ytil),
	\end{equation}
	where $\pi_a(\mutil, \mid \betatil, \delta^2, \ytil)$ is given in (\ref{eq:cpdv}), 
	$\pi_a(\betatil, \delta^2 \mid \ytil)$ is given in (\ref{eq:beta}) and $\pi_a(\delta^2 \mid \ytil)$ is given in
	(\ref{eq:22}) after re-transformation.
	Equation (\ref{eq:apos}) is the basis of our INNA approximation. This is simply the multiplication rule of probability;
	simply draw $\eta$ from (\ref{eq:22}) and retransform to $\delta^2$ to get  $\pi_a(\delta^2 \mid \ytil)$,
	draw $\betatil$ from $\pi_a(\betatil \mid \delta^2, \ytil)$ (multivariate normal and $\mutil$ from
	$\pi_a(\mutil, \mid \betatil, \delta^2, \ytil)$. Use a Metropolis algorithm with an approximate normal proposal density  to draw samples of $\nu_i$ independently given $\betatil$, $\delta^2$ and data.  We run each Metropolis 100 times and picked the last one. 
If the Metropolis step fails (jumping rate is not in $(.25, .50)$, we use a grid method instead.  Parallel computing can also be used in this latter step. This is performed in the same manner for the exact method. For our application with 3,912 households, this latter step  runs very fast. Of course, with much larger number of households, the computing time will be substantial, but now parallel
computing is available. 

\subsection{Comparison of the Two Methods}

As a summary, we compare the approximate and exact methods. The exact method is given in Appendix A.

First, we note that the exact method actually uses the approximate method. We
use a Metropolis step with $\pi_a(\betatil,\delta^2|\ytil)$, obtained from the approximate method.
We fit a multivariate Student's $t$ distribution with $\eta$ degrees of freedom to the iterates, $(\betatil, \log(\delta^2))$
from the approximate method as the proposal density in the Metropolis step. It is standard to tune the Metropolis step 
by varying $\eta$.

We present two differences between the two methods. First, both methods are sampling based; the approximate method
implements random samples and the exact method a Markov chain. 
Second, $\pi_a(\betatil, \delta^2 \mid \ytil)$ is used for the approximate method
and a Metropolis step is used for $\pi(\betatil, \delta^2 \mid \ytil)$.
It is this Metropolis step that is time-consuming (20 minutes versus 20 seconds); for a million households, we
can prorate this time (80 hours versus 80 minutes), enormous savings.

We present three similarities between the two methods.
First, $\pi(\nu_i \mid \betatil, \delta^2, \ytil), i=1,\ldots, \ell$, are drawn
the same way using a Metropolis step with proposal density $\pi_a(\nu_i \mid \betatil, \delta^2, \ytil), i=1,\ldots, \ell$.
Grids are used when the Metropolis step fails. Parallel computing can be done easily in both cases.
Second, for $\pi(\nu_i \mid \betatil, \delta^2, \ytil), i=\ell+1,\ldots,L$, are normal densities
($\ell = 3,912, ~L=60,262$). Third, for prediction two Bayesian bootstraps are used to get the nonsampled household 
sizes and the nonsampled covariates ($\approx$ two million people). This is done within wards.

\section{Illustrative Example}
\label{S:3}

In Section 3.1, we briefly describe the  Nepal Living Standards Survey (NLSS II) and in Section 3.2, we use the health status
data with five covariates to compare our approximate Bayesian logistic regression method with the exact one.

\subsection{Nepal Living Standards Survey}

We use data from the Nepal Living Standards Survey (NLSS II, Central Bureau of Statistics, 2003-2004) to illustrate INNA with logistic regression.  NLSS is a national household survey in Nepal, actually population based (i.e., interviews are done for individual household members). NLSS follows the World Bank's Living Standards Measurement Survey  methodology with a two-stage stratified sampling scheme, which has been successfully applied in many parts of the world. It is an integrated survey which covers samples from the whole country and runs throughout the year. The main objective of the NLSS is to collect data from Nepalese households and provide information to monitor progress in national living standards. The NLSS gathers information on a variety of areas. It has collected data on demographics, housing, education, health, fertility, employment, income, agricultural activity, consumption, and various other areas. NLSS has records for 20,263 individuals from 3,912 households from a 326 wards (or psus) from a population of 60,262 households and about two million Nepalese. We choose the binary variable, health status, from the health section of the questionnaire. There are hundreds of variables and we have selected five of the most important ones that can explain health status. We use health status as the binary variable with a selection of five pertinent covariates.

Health status is covered in Section 8 of the questionnaire. This section collected information on chronic and acute illnesses, uses of medical facilities, expenditures on them and health status. Health status questionnaire is asked for every individual that was covered in the survey. The health status questionnaire has four options. For our purpose we make it a binary variable,
good health or poor health.
 
In the NLSS II, Nepal is divided into wards (psu's) and within each ward there are a number of households. The sampling design of the  NLSS II used two-stage stratified sampling.  A sample of psu's was selected using PPS sampling and then twelve households were systematically selected from each ward. Thus, households have equal probability of selection. But while individuals in a household has equal probability of selection, these probabilities will vary with the size of the households. That is, over the entire Nepal, each individual has unequal probability of selection. 

We choose five relevant covariates which can influence health status from the same NLSS survey for the integrated nested normal approximation (INNA) logistic regression method. They are age, nativity, sex, area and religion. We created binary variables nativity (Indigenous = 1, Non-indigenous = 0), religion ((Hindu = 1, Non-Hindu = 0), sex (Male = 1, Female = 0) and area  (Urban = 1, Rural = 0). We standardize
age. Older age and child age are more vulnerable than younger age. Indigenous people can have different health status from migrated people. Similarly, health status of urban and rural citizens could be different.  Interest is on the smoothed proportion of household members in good health (binary). The NLSS 2003-2004 sample consists 3,912 households (roughly 20 thousand people) from  a population from 326 wards with 60,221 households. From Census 2001, the population of Nepal consists of 35,000 wards from 4 million households with roughly 20 million people. We will predict the finite population proportions of household members in good health based for the 60,221 households, not the entire Nepal. This can be done using the same methodology  because Nepal consists of six strata.

We  use a Bayesian hierarchical logistic regression model to predict all nonsampled households in the sampled wards.
To obtain smoothed estimates for the sampled households, we also predict these. We use Bayesian bootstraps (Rubin 1981) for unknown household sizes and nonsampled covariates; the bootstrapping is done within wards. The 2001 Census can potentially provide these two pieces of information, but there is a mis-match between the households in the census and the NLSS (a record linkage can be performed). We note, however, that there is linkage between the wards, but this information is not useful to household estimates.
  
\subsection{Numerical Comparisons}

We predict the household proportions of members in good health for 60,221 households. This analysis is based on 3,912 sample
households from 326 wards (PSUs). Our primary purpose is to compare the approximate method with the exact method when there 
are random effects at the household level. We want to show that one can safely use the approximate method to save computational 
time. Our secondary purpose is to compare the exact methods when there are random effects at the household level only and random
effects at the ward level. We note that there are only 326 wards, considerably less than the number of households.
In this case, a data analyst might try to save computational time by only using the random effects at the ward level, but this
procedure is not sensible.

First, in Figures  \ref{FG:poPM}, \ref{FG:poPSD} and  \ref{FG:poCV}, we compare respectively the posterior means (PMs),
posterior standard deviations (PSDs) and  posterior coefficient of variations (CVs) as our primary purpose. The PMs
are very closed, followed by the PSDs and then the CVs. For the PMs, the points lie very closely on the $45^o$ 
straight line through the origin. This is similar for the PSDs and CVs. However, the plot of the
PSD is  a bit thicker and the plot for the CVs has larger CVs more spread out. But overall, these approximations are 
quite acceptable to data analysts, scientists and engineers. We also show the posterior densities (PDs) of the 
hyperparameters. In Figure \ref{FG:podb} we plot the PDs of $\delta^2$ and $\beta_0$ and in Figure \ref{FG:pobeta}
we plot the remaining five $\beta$s. We can see that they are all unimodal, with the modes about the same, but, 
as expected,  the spread is a bit larger for the exact method. But these differences are not alarming. However, these 
differences are much smaller when inference is made about the household proportions. 

We also compare the approximate method versus the exact method when random effects are used only at the ward level.
Again, the PMs are very good; see Figure \ref{FG:poPMp}. The PSDs are more spread out (Figure \ref{FG:poPSDp})
and the plot of the CVs (Figure \ref{FG:poCVp}) is a similar to the household random effects. Again, the approximate method 
and the exact method are reasonably closed, of course, not as closed as for the household level random effects.

Finally, we compare the exact method at the ward level to the exact method at the household level. Figures 
\ref{FG:poPMe}, \ref{FG:poPSDe} and \ref{FG:poCVe} show that the comparisons are rather poor; there are no clear
indications of $45^o$ straight line through the origin. We have examined this further in Tables \ref{TAB:PMcat},
\ref{TAB:PScat} and \ref{TAB:PCVcat}, where we cross-classified the PMs in $0-.2$, $.2-.4$, $.4-.8$, $.8-1$,
the PSDs in $.0-.1$, $.1-.2$, $.2-.3$, $.3-.4$, $.4-.5$, and the CVs in $.02-.05$, $.05-.10$, $.10-.25$, $.25-.50$, 
$>.50$. For all three tables  a majority of the  points fall along the north-west to south-east diagonal, but there 
are many points off the diagonals. Thus, one must have random effects at the household level.

We conclude that the approximation at the household level is reasonable. The approximation is desirable
because one can perform the computations in real time.  One should not use random effects only at the
ward level to cut corners. This can be misleading.

\section{Concluding Remarks}

We make three statistical comments.
First,  the approximate method is necessary when there are a large number (millions) of households (clusters or areas).
Second, it is difficult to use the census data effectively but it is desirable (matching problem). Third,
it is possible to obtain similar approximations for spatial priors and Dirichlet process priors (under investigation)       

We make four computational comments. First, parallel computing is needed for big data (numerous small areas).
Second, consensus Monte Carlo method, although problematic, is needed for large data (storage problems).
For the NLSS survey, stratification already exists; Nepal has six strata and our INNA procedure  can be applied in 
each stratum in parallel. Third, MCMC (not INLA) methods are useful for small datasets; good approximations are needed
for large datasets (big data). Finally, INNA is potentially useful because modes are not required (INLA needs modes).      

We mention two possible extensions. The first extension is about survey weights. The second extension
is to a sub-area model.

PPS sampling is used in the first-stage of the survey design. Thus, there are survey weights (design, not adjusted 
weights). All households (each member) in a psu has the same weight. So we can proceed in one of the two ways in our 
analysis. First, we can use an adjusted logistic likelihood incorporating the survey weights. We discuss a single area, 
then we show how to extend it to small area problems. Let $\omega_i, i=1,\ldots,n$, denote the survey weights for
sampling from a single area. Let
$$
n \geq n_e = (\sum_{i=1}^n \omega_i)^2/(\sum_{i=1}^n \omega_i^2),
~~\tilde{\omega}_i = n_e \frac{\omega_i}{\sum_{i=1}^n \omega_i}, i=1,\ldots,n;
$$
see Potthof, Woodbury and Manton (1992) for pioneering work on equivalent sample sizes.\\
For $(y_i, \tilde{\omega}_i,\mathunderaccent\tilde{x}_i), i=1,\ldots,n$, we have
$$
p(y_i \mid \mathunderaccent\tilde{\beta}) \propto 
\left\{ \frac{e^{y_i \mathunderaccent\tilde{x}_i^\prime \mathunderaccent\tilde{\beta}}}
{1+e^{\mathunderaccent\tilde{x}_i^\prime \mathunderaccent\tilde{\beta}}}
\right\}^{\tilde{\omega}_{i}}, y_i = 0, 1, i=1,\ldots,n,
$$
$$
p(y_i \mid \mathunderaccent\tilde{\beta}) =
\frac{e^{y_i \mathunderaccent\tilde{\tilde{x}}_i^\prime \mathunderaccent\tilde{\beta}}}
{1+e^{\mathunderaccent\tilde{\tilde{x}}_i^\prime \mathunderaccent\tilde{\beta}}},
~y_i = 0, 1,$$
$$
 \mathunderaccent\tilde{\tilde{x}}_i = \tilde{\omega}_i \mathunderaccent\tilde{x}_i,  i=1,\ldots,n.
$$
For small areas, with
$(y_{ij}, \tilde{\omega}_{ij},\mathunderaccent\tilde{x}_{ij}), ~j=1,\ldots,n_i, ~i=1,\ldots,\ell$, we have
$$
p(y_{ij} \mid \mathunderaccent\tilde{\beta}, {\nu}_i) =
\frac{e^{y_{ij} (\mathunderaccent\tilde{\tilde{x}}_{ij}^\prime \mathunderaccent\tilde{\beta}+\nu_i)}}
{1+e^{\mathunderaccent\tilde{\tilde{x}}_{ij}^\prime \mathunderaccent\tilde{\beta}+\nu_i}},
~y_{ij} = 0, 1,
$$
$$
 \mathunderaccent\tilde{\tilde{x}}_{ij} = \tilde{\omega}_{ij} \mathunderaccent\tilde{x}_{ij},
~j=1,\ldots,n_i, ~i=1,\ldots,\ell.
$$

Our second extension is to sub-area model. One example already discussed in the literature
is that of Torabi and Rao (2014), who extended the Fay-Herriot model (Fay and Herriot 1979), not for logistic
regression. For our problem, the areas are the wards and sub-areas are the households.
Let $i=1,\ldots,\ell$, denote the areas and $j=1,\ldots,n_{i}$, denote the sub-areas.
We assume that
	$$ 
	y_{ijk}|\betatil,\nu_i, \mu_{ij} \stackrel{ind}{\sim} \mbox{Bernoulli}
\left\{ \frac{e^{\xtil_{ijk}'\betatil+\nu_i + \mu_{ij}}}{1+e^{\xtil_{ijk}'\betatil+\nu_i+\mu_{ij}}}\right\}, k=1,\ldots,m_{ij}, 
	$$
	$$
	\mu_{ij}| \sigma^2 \stackrel{iid} \sim \mbox{Normal}(0,\sigma^2), j=1,\ldots,n_i,
	$$
	$$
	\nu_i|\delta^2\stackrel{iid}{\sim}\mbox{Normal}(0,\delta^2), i=1,\ldots,\ell,
	$$
  $$
	\pi(\betatil,\delta^2, \sigma^2) \propto \frac{1}{(1+\sigma^2)^2}  \frac{1}{(1+\delta^2)^2},
	\sigma^2>0, \delta^2>0.
	$$
For logistic regression, this research is currently in progress.

\begin{appendices}
\section{Exact Method for Logistic Regression}
\def\theequation{A.\arabic{equation}}
\setcounter{equation}{0}

Recall the Bayesian logistic model with covariates that we worked on with INNA method
$$ 
y_{ij}|\mu_i,\betatil_{(0)} \stackrel{ind}{\sim} \mbox{Bernoulli}\left\{ \frac{e^{\xtil_{ij}'\betatil_{(0)}+\mu_i}}{1+e^{\xtil_{ij}'\betatil_{(0)}+\mu_i}}\right\},
$$
$$
\mu_i|\beta_0,\delta^2\stackrel{iid}{\sim}\mbox{Normal}(\beta_0,\delta^2),\\
$$
\begin{equation}
\pi(\betatil,\delta^2)\propto\frac{1}{(1+\delta^2)^2},
\delta^2>0, i=1,...,\ell, j=1,...,n_i.
\end{equation}
According to Bayes' theorem, the joint posterior density of the parameters $(\mutil,\betatil,\delta^2|\ytil)$ is
$$
\pi(\mutil,\betatil,\delta^2|\ytil)\\
\propto \pi(\ytil|\mutil,\betatil_{(0)}) \times \pi(\mutil|\beta_0,\delta^2)\times \pi(\betatil,\delta^2)
$$
$$	\propto\prod_{i=1}^\ell\left\{\left[\prod_{j=1}^{n_i}\frac{e^{(\xtil_{ij}'\betatil_{(0)}+\mu_i)y_{ij}}}{1+e^{\xtil_{ij}'\betatil_{(0)}+\mu_i}}\right]
\left[\frac{1}{\sqrt{2\pi\delta^2}}e^{-\frac{(\mu_i-\beta_0)^2}{2\delta^2}}\right]\right\}
\frac{1}{(1+\delta^2)^2}.
$$

\begin{theorem}
	\emph{The joint posterior density, $\pi(\mutil,\betatil,\delta^2|\ytil)$, is proper provided
		that the design matrix is full rank and $0<\sum_{j=1}^{n_i} y_{ij} < n_i, i=1,\ldots,\ell$.}
\end{theorem}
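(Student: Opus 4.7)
The goal is to bound
\begin{equation*}
I \;=\; \iiint \prod_{i=1}^\ell L_i(\betatil_{(0)}, \mu_i) \cdot N(\mu_i \mid \beta_0, \delta^2) \cdot (1+\delta^2)^{-2}\,d\mutil\,d\betatil\,d\delta^2,
\end{equation*}
where $L_i$ denotes the Bernoulli likelihood for group $i$. I plan to integrate out the parameters in the order $\mutil \to \beta_0 \to \delta^2 \to \betatil_{(0)}$, using the mixed-response condition for the inner integrations and the full-rank design condition for the final one.

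The key tool is a per-group upper bound on the likelihood. The condition $0<\sum_j y_{ij}<n_i$ guarantees indices $j_i^+, j_i^-$ with $y_{ij_i^+}=1$ and $y_{ij_i^-}=0$; bounding the remaining Bernoulli factors in $L_i$ by $1$ gives
\begin{equation*}
L_i(\betatil_{(0)}, \mu_i) \;\leq\; \phi\bigl(\xtil_{ij_i^+}'\betatil_{(0)}+\mu_i\bigr)\bigl(1-\phi\bigl(\xtil_{ij_i^-}'\betatil_{(0)}+\mu_i\bigr)\bigr).
\end{equation*}
The right-hand side decays exponentially as $|\mu_i|\to\infty$ and also decays in the within-group contrast $(\xtil_{ij_i^-}-\xtil_{ij_i^+})'\betatil_{(0)}$. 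Integrating $\mu_i$ against $N(\beta_0, \delta^2)$ then yields a finite marginal; integrating $\beta_0$ under the flat prior is finite because $\beta_0$ enters only through the $\ell$ Gaussian factors, whose product is Gaussian in $\beta_0$ centered at $\bar\mu$; and the $\delta^2$ integration converges because the prior $(1+\delta^2)^{-2}$ is proper and the preceding bounds are uniform in $\delta^2$ (peak-of-Gaussian argument at $\delta^2\to 0$, polynomial decay at $\delta^2\to\infty$). The residual integrand in $\betatil_{(0)}$ is then bounded above by the likelihood of a standard logistic regression on $2\ell$ matched pairs with intercept, whose posterior under a flat prior is proper under the full-rank design assumption by the arguments reviewed in Chen, Ibrahim and Kim (2008).

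\textbf{Main obstacle.} The hard step will be the $\betatil_{(0)}$ integration. Reducing to standard logistic regression propriety requires ruling out complete separation in the $2\ell$-observation matched-pair design, which follows straightforwardly from the full-rank condition in generic cases but is delicate when within-group covariates happen to be constant (so the contrasts $\xtil_{ij_i^-}-\xtil_{ij_i^+}$ vanish). In that case identifiability of $\betatil_{(0)}$ must come from between-group variation combined with the pooling effect of the Gaussian prior on the $\mu_i$. Making this rigorous requires a refined upper bound that retains more than two observations per group -- or, equivalently, splits the parameter space by the rank of the matrix of within-group contrasts and handles each piece separately -- so that sufficient decay of the integrand along every direction in $\mathbb{R}^p$ is captured before the flat prior on $\betatil_{(0)}$ is integrated. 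Once that decay is established, Fubini and the preceding uniform bounds give $I<\infty$.
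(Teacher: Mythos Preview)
Your route is genuinely different from the paper's. The paper does not bound the likelihood term by term or integrate in stages; instead it invokes Lemma~2.2, which shows that under the full-rank and mixed-response hypotheses the log-likelihood has negative-definite Hessian everywhere in $(\mutil,\betatil_{(0)})$. Multiplying by the Gaussian prior on $\mutil$ preserves logconcavity (product of logconcave functions is logconcave), and the paper then cites the general fact that logconcave densities have sub-exponential tails (Dharmadhikari and Joag-Dev 1988) to conclude $a(\delta^2)=\int\pi(\mutil,\betatil\mid\delta^2,\ytil)\,d\mutil\,d\betatil<\infty$ for each $\delta^2$; propriety of $\pi(\delta^2)=(1+\delta^2)^{-2}$ finishes the argument. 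The payoff is that your ``main obstacle'' never arises: the Schur-complement calculation in Lemma~2.2 is precisely the statement that full rank of the augmented design forces strict curvature in every $\betatil_{(0)}$-direction, so no separate case analysis on within-group contrasts is needed.

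Your explicit approach can be pushed through and is more self-contained, but two points need tightening. First, the $\beta_0$ step is mis-ordered: once $\mutil$ has been integrated out, $\beta_0$ no longer appears only through Gaussian factors, so the ``Gaussian in $\beta_0$ centered at $\bar\mu$'' reasoning fails; you must integrate $\beta_0$ \emph{before} $\mutil$ to get the factor you describe. Second, the obstacle you flag is partly self-inflicted---keeping only two observations per group can destroy the full rank the original design enjoys, since your chosen $j_i^\pm$ may share covariates even when other units in the group do not. You already note the fix (retain more observations), and the full-rank hypothesis guarantees you can select within-group contrasts that span the full covariate space; but the quicker resolution is simply to import Lemma~2.2 wholesale, whose Schur complement does exactly this bookkeeping for you.
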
	

\begin{proof}
	With a flat prior on the $\mu_i$ and $\betatil$, the same argument as in Lemma 2.2 gives
	logconcavity of the joint posterior density. Putting a logconcave prior on the $\mu_i$ does
	not change the logconavity of $\pi(\betatil,\mutil \mid \delta^2,\ytil)$ because the product
	of two logconcave densities is another logconcave density. In addition, logconcave densities
	have sub-exponential tails and their moment generating functions exist (see Dharmadhikari and Joag-Dev 1988). That is,
	$\int \pi(\mutil,\betatil \mid \delta^2, \ytil) d\betatil d\mutil = a(\delta^2)$
	finite for all $\delta^2$. Therefore, $\int a(\delta^2) \pi(\delta^2) d\delta^2 < \infty$ as
	long as $\pi(\delta^2)$ is proper as for $\pi(\delta^2) = 1/(1+\delta^2)^2$.  So that,
	$\pi(\mutil,\betatil,\delta^2|\ytil)$ is proper.
\end{proof}	

The standard MCMC logistic regression exact method is complicated to work with and it takes longer time to get posterior samples. We apply Metropolis Hastings sampler to draw samples for parameters $\betatil$, $\delta^2$ and $\mutil$.

The idea of exact method is to get full conditional posterior distributions for all of the parameters in the model, and then get a large number of independent samples of each parameter with its full conditional posterior density.\\
First, we integrate $\mutil$ from the posterior density to get the joint posterior density of $\betatil, \delta^2|\ytil$ as
$$
\pi(\betatil,\delta^2|\ytil)
\propto\int_{\Omega}^{}\prod_{i=1}^\ell\left\{\prod_{j=1}^{n_i}\frac{e^{(\xtil_{ij}'\betatil_{(0)}+\mu_i){y_{ij}}}}{1+e^{\xtil_{ij}'\betatil_{(0)}+\mu_i}}
\frac{1}{\sqrt{2\pi\delta^2}}e^{-\frac{(\mu_i-\beta_0)^2}{2\delta^2}}\right\}
\frac{1}{(1+\delta^2)^2}d\mutil
$$
$$
=\frac{1}{(1+\delta^2)^2}
\prod_{i=1}^\ell\left\{\int_{-\infty}^{\infty}\frac{e^{\sum\limits_{j=1}^{n_i}(\xtil_{ij}'\betatil_{(0)}+\mu_i){y_{ij}}}}{\prod\limits_{j=1}^{n_i}\left[1+e^{\xtil_{ij}'\betatil_{(0)}+\mu_i}\right]}
\frac{1}{\sqrt{2\pi\delta^2}}e^{-\frac{(\mu_i-\beta_0)^2}{2\delta^2}}d\mu_i
\right\}.
$$
Notice that this is not a simple distribution function for the integration, so we apply numerical integration. Divide the integration domain to $m$ equal intervals $[t_{k-1},t_k], k=1,...,m$. Let $z_i=\frac{\mu_i-\beta_0}{\delta}$ with standard normal distribution. We get an approximate density (very accurate though),
$$
\pi(\betatil,\delta^2|\ytil)\propto  \frac{1}{(1+\delta^2)^2}\left(\frac{1}{\sqrt{\delta^2}}\right)^\ell
\prod_{i=1}^\ell\left\{\sum_{k=1}^{m}\int_{t_{k-1}}^{t_k}\frac{e^{\sum\limits_{j=1}^{n_i}(\xtil_{ij}'\betatil_{(0)}+\mu_i){y_{ij}}}}{\prod\limits_{j=1}^{n_i}\left[1+e^{\xtil_{ij}'\betatil_{(0)}+\mu_i}\right]}
\frac{1}{\sqrt{2\pi}}e^{-\frac{(\mu_i-\beta_0)^2}{2\delta^2}}d\mu_i
\right\}
$$
$$
= \frac{1}{(1+\delta^2)^2}	\prod_{i=1}^\ell\left\{\sum_{k=1}^{m}\int_{t_{k-1}}^{t_k}\frac{e^{\sum\limits_{j=1}^{n_i}(\xtil_{ij}'\betatil_{(0)}+ \beta_0 +z_i\delta){y_{ij}}}}{\prod\limits_{j=1}^{n_i}\left[1+e^{\xtil_{ij}'\betatil_{(0)}+\beta_0+z_i\delta}\right]}
\frac{1}{\sqrt{2\pi}}e^{-\frac{z_i^2}{2}}dz_i
\right\}.
$$

Take the middle point of each interval $[t_{k-1},t_k]$ to estimate the cumulative density function, and denote $\hat{z}_k=\frac{t_k+t_{k-1}}{2}$. We have the following deduction
$$
\pi(\betatil,\delta^2|\ytil)\approx \frac{1}{(1+\delta^2)^2}
\prod_{i=1}^\ell\left\{\sum_{k=1}^{m}\frac{e^{\sum\limits_{j=1}^{n_i}(\xtil_{ij}'\betatil_{(0)}+\beta_0+\hat{z}_k\delta){y_{ij}}}}{\prod\limits_{j=1}^{n_i}\left[1+e^{\xtil_{ij}'\betatil_{(0)}+\beta_0+\hat{z}_k\delta}\right]}
\int_{t_{k-1}}^{t_k}\frac{1}{\sqrt{2\pi}}e^{-\frac{z^2}{2}}dz\right\}.
$$
The integration is now over a standard normal distribution. We consider the interval (-3, 3) for numerical integration, since this domain (standard normal) covers 99.74\% of the distribution that we are dealing with. We take m=100 grid points. Then the joint posterior density for $\betatil$ and $\delta^2$ can be expressed as
\begin{equation}
\pi(\betatil,\delta^2|\ytil)\approx\frac{1}{(1+\delta^2)^2}
\prod_{i=1}^\ell\left\{\sum_{k=1}^{m}\frac{e^{\sum\limits_{j=1}^{n_i}(\xtil_{ij}'\betatil_{(0)}+\beta_0+\hat{z}_k\delta){y_{ij}}}}{\prod\limits_{j=1}^{n_i}\left[1+e^{\xtil_{ij}'\betatil_{(0)}+\beta_0+\hat{z}_k\delta}\right]}
\left(\Phi(t_k)-\Phi(t_{k-1})\right)\right\}.
\end{equation}

We use the Metropolis sampler to draw samples from the joint posterior density of $\pi(\mathunderaccent\tilde{\beta},
\delta^2 \mid \ytil)$ drawing $\mathunderaccent\tilde{\beta}$ and $\delta^2$ simultaneously.	We use
the transformation $\mbox{log}(\delta^2) = \beta_{p+1}$. We obtain a proposal density for $\mathunderaccent\tilde{\beta}$
using a multivariate Student's $t$ density as follows. First, when we run the approximate method, we
obtain the approximate mean, $\mathunderaccent\tilde{\hat{\beta}}$ and covariance matrix $\hat{\Sigma}$
of $\mathunderaccent\tilde{\beta}$. So we take $\mathunderaccent\tilde{\beta} \sim \mbox{Normal}(\mathunderaccent\tilde{\hat{\beta}}, \sigma^2 \hat{\Sigma}), \eta/\sigma^2 \sim \mbox{Gamma}(\eta/2, 1/2)$.
Tuning of the Metropolis sampler is obtained by varying $\eta$ (e.g., $\eta=8$ corresponds to approximately a logistic
random vector). We run the Metropolis sampler in a nonstandard manner, as the Markov chain runs, we reserve
those iterates when the algorithm moves. In this way while tuning, is required to get a reasonable jumping rate,
no other diagnostics (autocorrelation, effective sample size, test of stationarity, thinning) are needed.

The number of nonsampled households in the sampled PSUs are known, but the number of members of a nonsampled household
and their covariates are unknown. For each sampled PSU, we obtain the empirical distribution of the number
of members and their covariates. Then, we use two Bayesian bootstraps to sample the number of members and their
covariates from the pool of the PSU. For both bootstraps we use $1000$ samples.

\section{Quasi-Modes for Logistic Regression}
\def\theequation{B.\arabic{equation}}
\setcounter{equation}{0}

%%%%%%%%%%%%%%%%%%%%%%%%%%%%%%%%%%%%%%%%%%%%%%%%%%%%%%%%%%%%%%%%%%%%%%%%%%%%%%%%%%%%%%%%%%%%%%%%%%%%%%%%%%%%		

Now we have to specify $\betatil^*_{(0)}$, $\mutil^*$, $\gtil$ and $H$. Consider the log likelihood function
$$
f(\tautil)
=\log h(\tautil) =\log \left\{\prod _{i=1}^\ell\prod _{j=1}^{n_i}\frac{e^{(\xtil_{ij}'\betatil_{(0)}+\mu_i){y_{ij}}}}{1+e^{\xtil_{ij}'\betatil_{(0)}+\mu_i}} \right\}
$$
\begin{equation} 
\label{eq:5}
=\sum_{i=1}^\ell\sum_{j=1}^{n_i} \left\{(\xtil_{ij}'\betatil_{(0)}+\mu_i)y_{ij}-
\log (1+e^{\xtil_{ij}'\betatil_{(0)}+\mu_i} ) \right\}.
\end{equation}

As an estimate of $\mu_i$, we use the empirical logistic transform $z_i$,
$${\hat{\mu}_i}^*=z_i=\log \left\{\frac{\bar{y}_i+\frac{1}{2n_i}}{1-\bar{y}_i+\frac{1}{2n_i}} \right\}.$$
See Appendix C.
Plug ${\hat{\mu}_i}^*$ in the log likelihood function (\ref{eq:5}) and consider it as a function of $\betatil_{(0)}$ only, 
$$ 
g(\betatil_{(0)}) 
=\sum_{i=1}^\ell\sum_{j=1}^{n_i} \left\{(\xtil_{ij}'\betatil_{(0)}+\hat{\mu}_i^*)y_{ij}-
\log (1+e^{\xtil_{ij}'\betatil_{(0)}+\hat{\mu}_i^*} ) \right\}.
$$
The first derivative function of $g(\betatil_{(0)})$ over $\betatil_{(0)}$ is
$$
g'(\betatil_{(0)}) =\sum_{i=1}^\ell\sum_{j=1}^{n_i} \left\{\xtil_{ij} y_{ij}-\frac{\xtil_{ij} e^{(\xtil_{ij}'\betatil_{(0)}+\hat{\mu}_i^*)}}{1+e^{\xtil_{ij}'\betatil_{(0)}+\hat{\mu}_i^*}} \right\}
$$
\begin{equation} 
\label{eq:6}
=\sum_{i=1}^\ell\sum_{j=1}^{n_i} \left\{\xtil_{ij} y_{ij}-\xtil_{ij} (1+e^{-(\xtil'\betatil_{(0)}+\hat{\mu}_i^*)} )^{-1} \right\}.
\end{equation}
Typically, we can solve the equation $g'(\betatil_{(0)})=0$ for the mode as the maximum likelihood estimator (MLE), but here it is not easy to solve the equation because $g'(\betatil_{(0)})$ is complex. We use first order Taylor series approximation to simplify the above function. Since the first order Taylor expansion of $ (1+e^{-(\xtil_{ij}'\betatil_{(0)}+\hat{\mu}_i^*)} )^{-1}$ equals $ (1-e^{-(\xtil_{ij}'\betatil_{(0)}+\hat{\mu}_i^*)} )$, (\ref{eq:6}) equals to
\begin{equation} 
\label{eq:7}
\sum_{i=1}^\ell\sum_{j=1}^{n_i}  \left\{\xtil_{ij} y_{ij}-\xtil_{ij} (1-e^{-(\xtil'_{ij}\betatil_{(0)}+\hat{\mu}_i^*)} ) \right\}.
\end{equation}
This is still complex. We apply Taylor series again and get expansion of the term $e^{-(\xtil'\betatil_{(0)}+\hat{\mu}_i^*)}$ to the first order as $(1-(\xtil'\betatil_{(0)}+\hat{\mu}_i^*))$. Thus (\ref{eq:7}) approximately equals
$$
\sum_{i=1}^\ell\sum_{j=1}^{n_i}  \left\{\xtil_{ij} y_{ij}-\xtil_{ij} (1-(1-(\xtil_{ij}'\betatil_{(0)}+\hat{\mu}_i^*)) ) \right\}
$$
\begin{equation} 
\label{eq:8}
=\sum_{i=1}^\ell\sum_{j=1}^{n_i} \left\{\xtil_{ij}(y_{ij}-\hat{\mu}_i^*)-\xtil_{ij}(\xtil_{ij}'\betatil_{(0)}) \right\}.
\end{equation}
(\ref{eq:8}) is easy to solve. Solve for $g'(\betatil_{(0)})=0$, and we can get the approximate posterior mode of $\betatil_{(0)}$
\begin{equation} 
\label{eq:9}
\betatil_{(0)}^*= \left[ \sum_{i=1}^\ell\sum_{j=1}^{n_i}\xtil_{ij}\xtil_{ij}' \right] ^{-1} \left[ \sum_{i=1}^\ell\sum_{j=1}^{n_i}\xtil_{ij}(y_{ij}-\hat{\mu}_i^*) \right] .
\end{equation}

Plug $\betatil_{(0)}^*$ in the likelihood function (\ref{eq:5}) and consider it as a function of $\mutil$ only, 
$$ 
q(\mu_i)=\log\prod_{j=1}^{n_i}\frac{e^{(\xtil_{ij}'\betatil_{(0)}^*+\mu_i)
		{y_{ij}}}}{1+e^{\xtil_{ij}'\betatil_{(0)}^*+\mu_i}}=\sum_{j=1}^{n_i} \left\{(\xtil_{ij}'\betatil_{(0)}^*+\mu_i)y_{ij}-
\log (1+e^{\xtil_{ij}'\betatil_{(0)}^*+\mu_i} ) \right\}.
$$
The first derivative function of $q(\mu_i)$ over $\mu_i$ is
\begin{equation} 
\label{eq:10}
q'(\mu_i)=\sum_{j=1}^{n_i} \left\{y_{ij}-\frac{e^{(\xtil_{ij}'\betatil_{(0)}^*+\mu_i)}}{1+e^{\xtil_{ij}'\betatil_{(0)}^*+\mu_i}} \right\}=\sum_{j=1}^{n_i} \left\{y_{ij}- (1+e^{-(\xtil_{ij}'\betatil_{(0)}^*+\mu_i)} )^{-1} \right\}.
\end{equation}
Similar to above, we apply Taylor series approximation
$$ (1+e^{-(\xtil_{ij}'\betatil_{(0)}^*+\mu_i)} )^{-1}\approx  (1-e^{-\mu_i}e^{-\xtil_{ij}'\betatil_{(0)}^*} ).$$
So (\ref{eq:10}) equals
$$ 
\sum_{j=1}^{n_i} \left\{y_{ij}- (1-e^{-\mu_i}e^{-\xtil_{ij}'\betatil_{(0)}^*} ) \right\}.
$$
Solve for $q'({\mu_i})=0 $, then the approximate posterior mode (quasi-mode) of $\mu_i$ can be obtained as $${\mu_i}^*=\log \left[ \frac{\sum _{j=1}^{n_i}e^{-\xtil_{ij}'\betatil_{(0)}^*}}{n_i(1-\bar{y}_i)} \right] .$$
Notice that the term $(1-\bar{y}_i)$ in denominator may cause trouble for this posterior mode, because the binary response variable could lead to $\bar{y}_i=1$ for some $i$, so that $(1-\bar{y}_i)=0$. We borrow the idea from the empirical logistic transform (ELT) and make a little adjustment to avoid 0's in denominator
\begin{equation} 
\label{eq:11}
{\mu_i}^*
\approx\log \left[ \frac{\frac{1}{n_i}\sum _{j=1}^{n_i}e^{-\xtil_{ij}'\betatil_{(0)}^*}}{(1-\bar{y}_i+\frac{1}{2n_i})} \right] .
\end{equation}

%%%%%%%%%%%%%%%%%%%%%%%%%%%%%%%%%%%%%%%%%%%%%%%%%%%%%%%%%%%%%%%%%%%%%%%%%%%%%%%%%%%%%%%%%%%%%%%%%%%%%%%%%%%%%%%%%%%%%%%%%%%%%%%		 

\section{Empirical Logistic Transform (ELT)}
\def\theequation{C.\arabic{equation}}
\setcounter{equation}{0}

We consider the empirical logistic transform  (ELT) without covariates for binary
data. See Cox and Snell (1972) for the empirical logistic transform (ELT) that
accommodates binary data. 
Letting $y$ denote a binomial random variable with success probability $p$, the empirical logistic transform, $Z$, is
$$Z=\mbox{log} (\frac{Y+\frac{1}{2}}{n-Y+\frac{1}{2}} ),$$
and the corresponding variance V is
$$\mbox{V}=\frac{(n+1)(n+2)}{n(Y+1)(n-Y+1)}.$$
Then, using the approximation of Cox and Snell (1972),
$Z$ has a normal distribution with mean $\theta$ and variance V, where $\theta$ is unknown according to Cox and Snell (1972). 
Suppose $\ytil$ is the variable of length $\ell$. Each of the binary response $y_i(i=1,...,\ell)$ follows a binomial distribution with corresponding number of observations $n_i$ and probability $p_i$. The goal is to estimate the Bernoulli probability parameter $p_i$. Here we assume that
$$
y_i\stackrel{ind}{\sim} \mbox{Binomial}\left\{n_i,p_i\right\}
$$
and for logistic transform we
define $z_i=\log(\frac{y_i+\frac{1}{2}}{n_i-y_i+\frac{1}{2}})$ as the the empirical logistic transforms, and	$V_i=\frac{(n_i+1)(n_i+2)}{n_i(y_i+1)(n_i-y_i+1)}$ as the associated variances. Then,
$$
z_i \mid \mu_i \stackrel{ind} \sim \mbox{Normal}(\mu_i, V_i).
$$
We can actually start with this approximation based on the empirical logistic transform. However, this approximation
will not work for binary data with covariates at the unit level, but we will make use it in our approximation for
logistic regression with binary data in a less important manner.

\end{appendices}

%%%%%%%%%%%%%%%%%%%%%%%%%%%%%%%%%%%%%%%%%%%%%%%%%%%%%%%%%%%%%%%%%%%%%%%%%%%%%%%%%%%%%%%%%%%%%%%%%%%%%%%%%%%%%%%%%%%%%%%%%%%%%%

\noindent
{\bf Acknowledgments}\\
This work is a paper to honor JNK Rao on his  eightieth birthday.
This research was supported by a grant from the Simons Foundation (\#353953, Balgobin Nandram).

%%%%%%%%%%%%%%%%%%%%%%%%%%%%%%%%%%%%%%%%%%%%%%%%%%%%%%%%%%%%%%%%%%%%%%%%%%%%%%%%%%%%%%%%%%%%%%%%%%%%%%%%%%%%%%%%%%%%%%%%%%%%%%%		 

\begin{mybibliography}{}
		
\bibitem[]{} Albert, J. H. and Chib, S. (1993). 
\newblock  {Bayesian Analysis of Binary and Polychotomous Response Data}.
\newblock  {\it Journal of the American Statistical Association}, \textbf{88}, 669-679.		
		
\bibitem[]{} Beaumont, M. A., Zhang, W. and Balding, D. J. (2002). 
\newblock  {Approximate Bayesian Computation in Population Genetics}.
\newblock  {\it Genetics}, \textbf{162}, 2025-2035.		
				
\bibitem[]{} Boyd, S. and Vandenberghe, L. (2004). 
\newblock  {\it Convex Optimization}.
\newblock  Cambridge University Press, Cambridge, United Kingdom.

\bibitem[]{} Central Bureau of Statistics Thapathali. 
\newblock {Nepal living Standards Survey 2003/04}.
\newblock  {\it Statistical Report Volume 1}, Kathmandu, Nepal.

\bibitem[]{} Chen, M-H, Ibrahim, J. G. and Kim, S. (2008).
\newblock {Properties and Implementation of Jeffreys's Prior
in Binomial Regression Models}.
\newblock {\it Journal of the American Statistical Association}, \textbf{103}, 1659-1664.

\bibitem[]{} Cox, D. R. and Snell, E. J. (1972). 
\newblock {\it Analysis of Binary Data, Second Edition}.
\newblock  {Chapman \ Hall}. \textbf{2}, 31-32.
		
\bibitem[]{} Dharmadhikari, S. W. and Joag-Dev, K. (1988). 
\newblock    {\it Unimodality, Convexity and Applications}.
\newblock    Academic Press, New York.
			
\bibitem[]{} Fearnhead, P. and Prangle, D. (2012).
\newblock    { Constructing Summary Statistics for Approximate
               Bayesian computation: Semi-automatic ABC}. 
\newblock    {\it Journal of the Royal Statistical Society},  \textbf{B(74)}, 419-474. 

\bibitem[]{}  Faes, C., Ormerod, J. T. and Wand, M. P. (2011).
\newblock {Variational Bayesian Inference for Parametric and
Nonparametric Regression With Missing Data}.
\newblock  {\it Journal of the American Statistical Association}, \textbf{106}, 959-971.

\bibitem[]{} Fay, R.E. and Herriot, R.A. (1979).
\newblock {Estimates of Income for Small Places: an Application of James-Stein Procedures to Census Data}.
\newblock  {\it Journal of the American Statistical Association}, \textbf{74}, 269-277.

\bibitem[]{} Ferkingstad, E. and Rue, H. (2015).
\newblock {Improving the INLA approach for approximate Bayesian inference for latent Gaussian models}.
\newblock  {\it Electronic Journal of Statistics}, \textbf{9}, 2706-2731.
		
\bibitem[]{} Fong, Y., Rue, H. and Wakefield, J. (2010). 
\newblock {Bayesian inference for generalized linear mixed models}.
\newblock  {\it Biostatistics}, \textbf{11(3)}, 397-412.

\bibitem[]{} Ghosh, M., Natarajan, K., Stroud, T. W. F. and Bradley P. Carlin, B P. (1998).
\newblock {Generalized Linear Models for Small-Area Estimation}.
{\it Journal of the American Statistical Association}, \textbf{93, 441}, 273-282   

\bibitem[]{} Held, L., Schr\"{o}dle, B. and Rue, H. (2010). 
\newblock {Posterior and Cross-Validatory Predictive Checks: A Comparison of MCMC and INLA}.
\newblock  {\it Statistical Modeling and Regression Structures, Physica-Verig HD}, 91-110.

\bibitem[]{} Holmes, C. C. and Held, L. (2006). 
\newblock {Bayesian Auxiliary Variable Models for Binary and Multinomial Regression}.
\newblock  {\it Bayesian Analysis}, \textbf{1(1)}, 145-168.
		
\bibitem[]{} Illian, J. B., S\o rbye, S. H. and Rue, H. (2012). 
\newblock {A toolbox for fitting complex spatial point process models using integrated nested 
		            Laplace approximation (INLA)}.
\newblock  {\it The Annals of Applied Statistics}, \textbf{6(4)}, 1499-1530.
		
%\bibitem[]{} *Jackman, S. (2011). 
%\newblock {Estimation and Inference Via Bayesian Simulation: An Introduction to Markov chain Monte Carlo}.
%\newblock  {\it American journal of political science}, \textbf{44(2)}, 375-404.

\bibitem[]{} Jaakkola, T. S. and Jordan, M. I. (2000).
\newblock    {Bayesian Parameter Estimation via Variation Methods}.
\newblock    {\it Statistics and Computing}, \textbf{10}, 25-37.
		
\bibitem[] {} Liu, J. and Dey, D. K. (2008).
\newblock {Skew random effects in multilevel binomial models:
an alternative to nonparametric approach}.
\newblock {\it Statistical Modelling}, \textbf{8(3)}, 221-241

\bibitem[] {} Larsen, K., Petersen, J. H., Budtz-Jørgensen, E. and 
            Endahl, L. (2000).
\newblock {Interpreting Parameters in the Logistic Regression Model with Random Effects}.
\newblock {\it Biometrics},  \textbf{56(3)}, 909-914.		

\bibitem[]{} Miroshnikov, A. and Conlon, E. M. (2015).
\newblock  {Parallel Markov Chain Monte Carlo for Non-Gaussian Posterior Distributions}.
\newblock   {\it Department of Mathematics and Statistics, University of Massachusetts, Amherst,
             Massachusetts}, pp. 1-32.

\bibitem[]{} Nandram, B. (2000).
\newblock   {Bayesian Generalized Linear Models for Inference about Small Areas}.
\newblock   {\it In Generalized linear models: A Bayesian Perspective},
            Eds. D. K. Dey, S.K. Ghosh and B.K. Mallick, Marcel
            Dekker, Chapter 6, 91-114.

\bibitem[]{}  Nandram, B. (1989).
\newblock  {Discrimination between Complementary
             Log-log and Logistic Model for Ordinal Data}.
\newblock   {\it Communications in Statistics, Theory and Methods}, \textbf{18}, 2155-2164.

\bibitem[]{} Nandram, B. and Chen, M-H. (1996).
\newblock   {Reparameterizing the Generalized Linear Model to Accelerate Gibbs Sampler Convergence}.
\newblock   {\it Journal of Statistical Computation and simulation}, \textbf{54}, 129-144.

\bibitem[]{} Nandram, B. and Choi, J. W. (2010).
\newblock   {A Bayesian Analysis of Body Mass Index Data from 
            Small Domains Under Nonignorable Nonresponse and Selection}.
\newblock   {\it Journal of the American Statistical Association}, \textbf{105}, 120-135.

\bibitem[]{} Nandram, B. and Erhardt, E. (2005).
\newblock    {Fitting Bayesian Two-Stage Generalized Linear Models
              Using Random Samples via the SIR Algorithm}.
\newblock    {\it Sankhya}, \textbf{66}, 733-755.

\bibitem[]{} Ormerod, J. T., and Wand, M. P. (2010). 
\newblock   {Explaining Variational Approximations}.
\newblock   {\it The American Statistician}, \textbf{64}, 140-153.

\bibitem[]{} Paulino, C.D., Silva, G. and Achcar, J.A. (2005). 
\newblock   {Bayesian Analysis of Correlated Misclassified Binary Data}.
\newblock   {\it Computational Statistics \& Data Analysis}, \textbf{49(4)}, 1120-1131. 

\bibitem[]{} Potthoff, R.F., Woodbury, M.A. and Manton, K.G. (1992). 
\newblock   {Equivalent Sample Size and Equivalent Degrees of Freedom Refinements for Inference Using Survey Weights 
under Superpopulation Models}.
\newblock   {\it Journal of the American Statistical Association}, \textbf{87}, 383-396.

\bibitem[]{} Roberts, G., Rao, J. N. K. and Kumar, S. (1987).
\newblock    {Logistic Regression Analysis of Sample Survey Data}.
\newblock {\it Biometrika}, \textbf{74(1)}, 1-12.
	
\bibitem[]{} Rubin, D. B. (1981).
\newblock    {The Bayesian Bootstrap}.
\newblock {\it The Annals of Statistics}, \textbf{9(1)}, 130-134.
		
\bibitem[] {} Rue, H., Martino, S. and Chopin, IN. (2009). 
\newblock {Approximate Bayesian Inference for Latent Gaussian Models Using Integrated Nested 
		            Laplace Approximations}.
\newblock  {\it Journal of the Royal Statistical Society} Series B, \textbf{71(2)}, 319-392.

\bibitem[] {} Santos, C. C., Loschi, R. H. and Arellano-Valle, R. B. (2013).
\newblock {Parameter Interpretation in Skewed Logistic Regression 
            with Random Intercept}.
\newblock {\it Bayesian Analysis}, \textbf{8(2)}, 381-410

\bibitem[] {} Souza, A.D. and Migon, H.S. (2010).
\newblock {Bayesian outlier analysis in binary regression}.
\newblock {\it Journal of Applied Statistics}, \textbf{37(8)}, 1355-1368

\bibitem[] {} Scott, S. L., Blocker, A. W., Bonassi1, F. V., Chipman, H. A., George, E. I. and McCulloch, R. E. (2013).
\newblock {Bayes and Big Data: The Consensus Monte Carlo Algorithm}.
\newblock {\it Technical Report}, Google, Inc., 1-22.

\bibitem[] {} Torabi, M. and Rao, J. N. K. (2014).
\newblock {On Small Area Estimation under a Sub-Area Level Model}.
\newblock {\it Journal of Multivariate Analysis},\textbf{127}, 36-55.

\end{mybibliography}

\captionsetup[table]{skip=0pt} 
\begin{table}[htb]
\caption{\textbf{Categorical tables for 60,221 households by posterior mean of model (2) with random effects at the
ward level projected to the households and the model (1) at household level}}
\label{TAB:PMcat}
\begin{center}
\begin{tabular}{cccccccccccccccccccccccccccccc}
   && & \multicolumn{5} {c} {Model 2} &&  \\
\hline
 Model 1          && &  $.0-.2$ &  $.2-.4$  & $.4-.6$ & $.6-.8$  & $.8-1$    &&  Total\\         
\hline  

    $.0-.2$   &&   &   ~~1   &     ~~19  &     ~~586   &    ~228   &     ~29     &&       ~~863\\
    $.2-.4$   &&   &   ~~2   &    ~187  &    ~1845   &    ~139   &     ~44     &&      ~2217\\
    $.4-.6$   &&   & 251   &   1730  &   29861   &   9638   &    664     &&     42144\\
    $.6-.8$   &&   &  ~87   &    ~548  &    ~9115   &   3491   &    253     &&     13494\\
    $.8-1$   &&   &   ~~9   &     ~~24  &    ~1022   &    ~384   &     ~64     &&      ~1503\\

\hline
\end{tabular}
\\
%\vspace*{.10in}
\flushleft
%NOTE: Model 1 and Model 2.
\end{center}
%\end{table}
\vspace*{.30in}
%\begin{table}[htb]
\caption{\textbf{Categorical tables for 60,221 households by posterior standard deviation of model (2) with random effects at the
ward level projected to the households and the model (1) at household level}}
\label{TAB:PScat}
\begin{center}
\begin{tabular}{cccccccccccccccccccccccccccccc}

   && & \multicolumn{5} {c} {Model 2} &&  \\
\hline
 Model 1      && &  $.0-.1$ & $.1-.2$   & $.2-.3$ & $.3-.4$  & $.4-.5$   &&  Total\\         
\hline  

    $.0-.1$   &&  &  ~~5   &   ~~66   &  ~~~626    &   ~~41    &    ~1     &&     ~~739\\
    $.1-.2$   &&  &  ~13   &   ~233   &   ~2411    &   ~105    &    ~2     &&     ~2764\\
    $.2-.3$   &&  &  348   &   2826   &  46500     &   3391    &    37     &&     53102\\
    $.3-.4$   &&  &  ~~5   &   ~180   &  ~2859     &   ~553    &    ~3     &&     ~3600\\
    $.4-.5$   &&  &  ~~0   &   ~~~0   &  ~~~~13    &   ~~~2    &    ~1     &&     ~~~16\\

\hline
\end{tabular}
\\
%\vspace*{.10in}
\flushleft
%NOTE: Model 1 and Model 2.
\end{center}
%\end{table}
\vspace*{.30in}

%\begin{table}[htb]
\caption{\textbf{Categorical tables for 60,221 households by posterior coefficient of variation of model (2) with random effects at the
ward level projected to the households and the model (1) at household level}}
\label{TAB:PCVcat}
\begin{center}
\begin{tabular}{cccccccccccccccccccccccccccccc}

 && & \multicolumn{5} {c} {Model 2} &&  \\
\hline
Model 1     && & $.02-.05$ & $.05-.10$ & $.10-.25$  & $.25-.50$  & $>.50$   &&  Total\\         
\hline  
 $.02-.05$      && &    ~0  &     ~~0   &    ~~4   &  ~~~20  &   ~~~~9    &&      ~~~33\\
 $.05-.10$      && &    ~0  &     ~~3   &    ~20   &  ~~188  &   ~~~66    &&      ~~277\\
 $.10-.25$      && &    ~1  &     ~~4   &    ~80   &  ~~856  &   ~~368    &&      ~1309\\
 $.25-.50$      && &    28  &     128   &    791   &  23694  &   10535    &&      35176\\
 $>.50$      && &    22  &     ~62   &    399   &  ~9996  &   12947    &&      23426\\

\hline
\end{tabular}
\\
%\vspace*{.10in}
\flushleft
%NOTE: Model 1 and Model 2.
\end{center}
\end{table}

\newpage
\begin{figure}[htb]
	\begin{center}
		\caption{\textbf{Comparison of the posterior means (PM) of the household proportions
			by the approximate method 
			and the MCMC method}}
		\includegraphics[height=8in,width=6in]{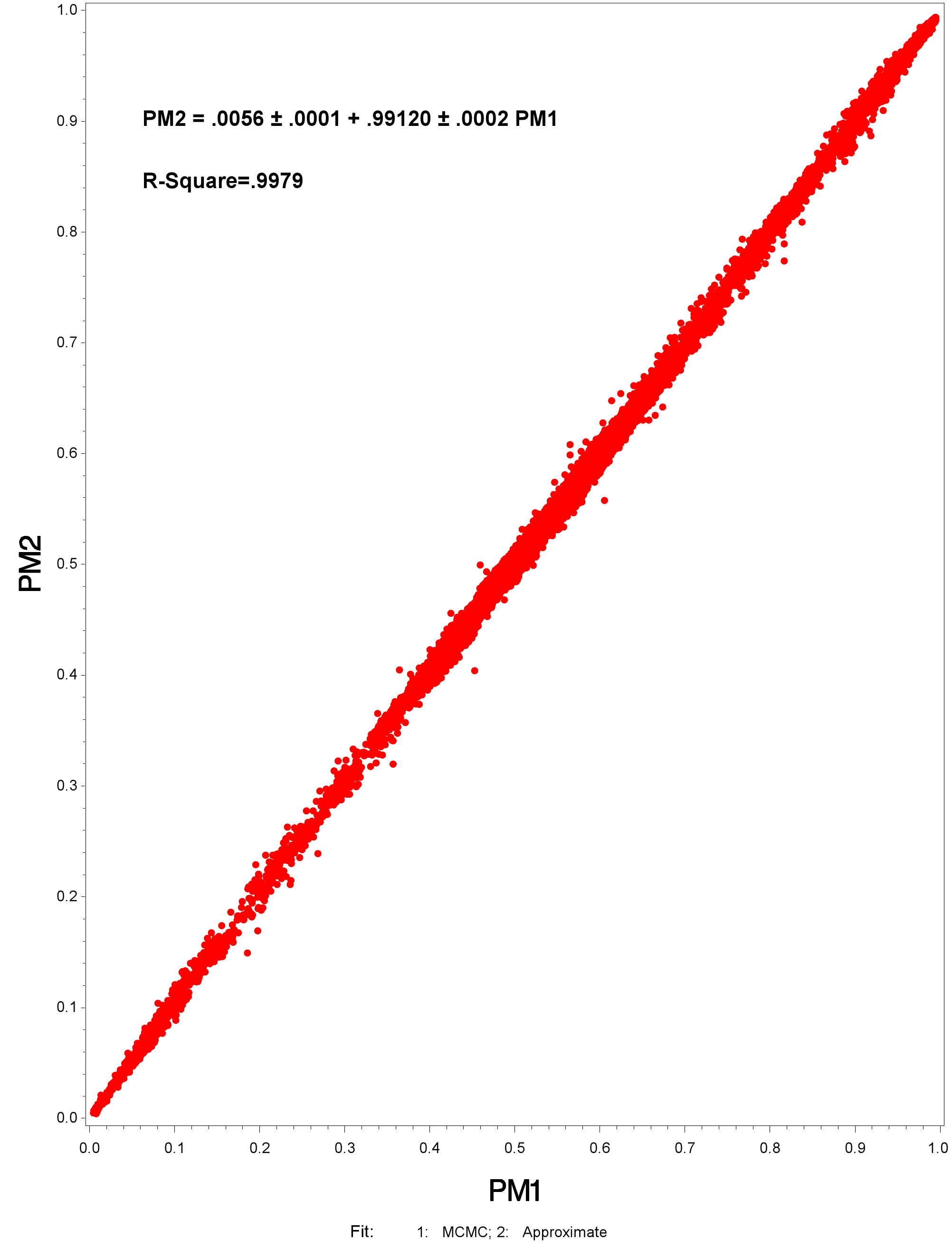}
		\label{FG:poPM}
	\end{center}
\end{figure}

\begin{figure}[htb]
	\begin{center}
		\caption{\textbf{Comparison of the posterior standard deviations (PSD) of the household
			proportions by the approximate method  and the MCMC method}}
		\includegraphics[height=8in,width=6in]{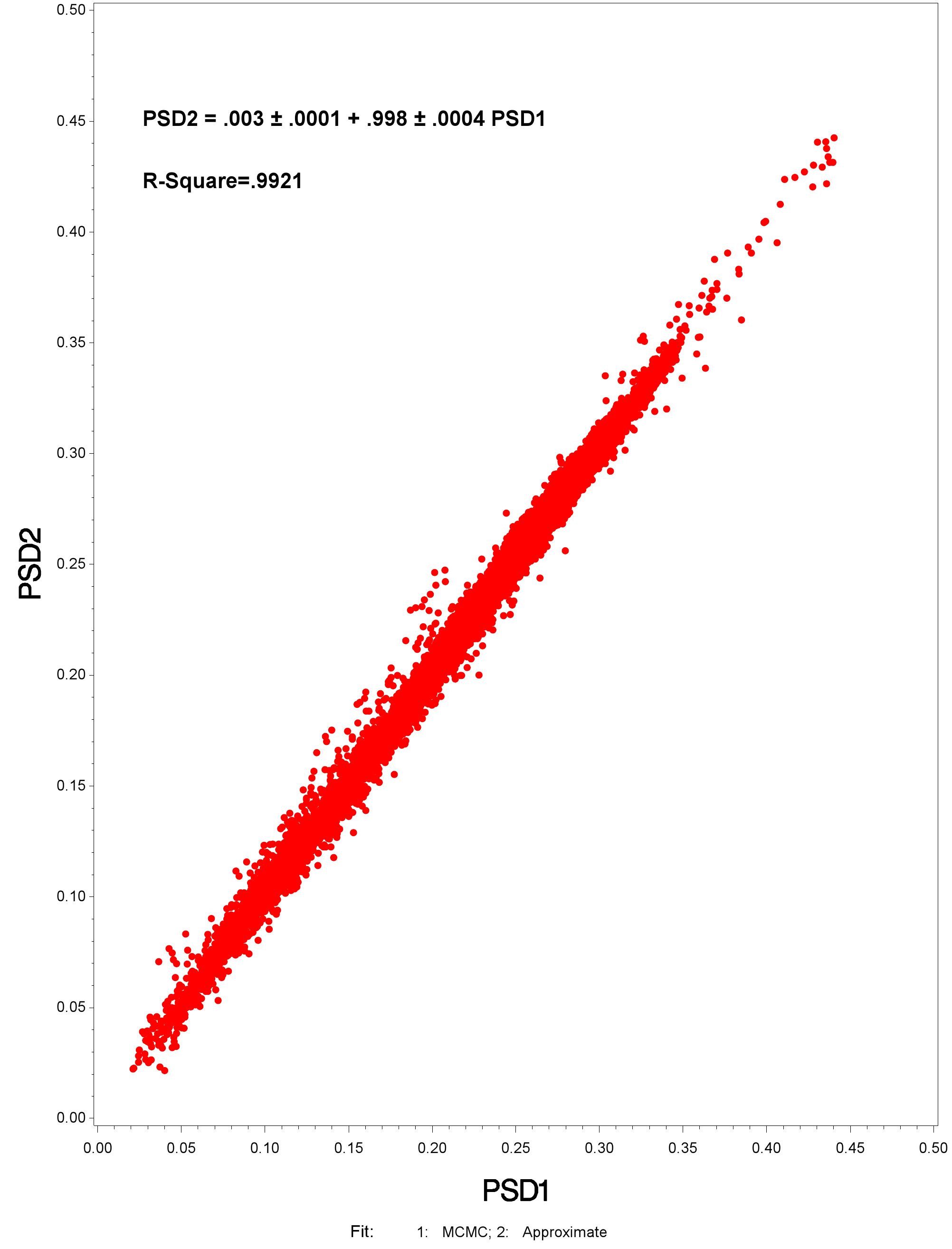}
		\label{FG:poPSD}
	\end{center}
\end{figure}

\begin{figure}[htb]
	\begin{center}
\caption{\textbf{Comparison of the posterior coefficient of variations (CV) of the household proportions by the approximate method  and the MCMC method}}
\includegraphics[height=8in,width=6in]{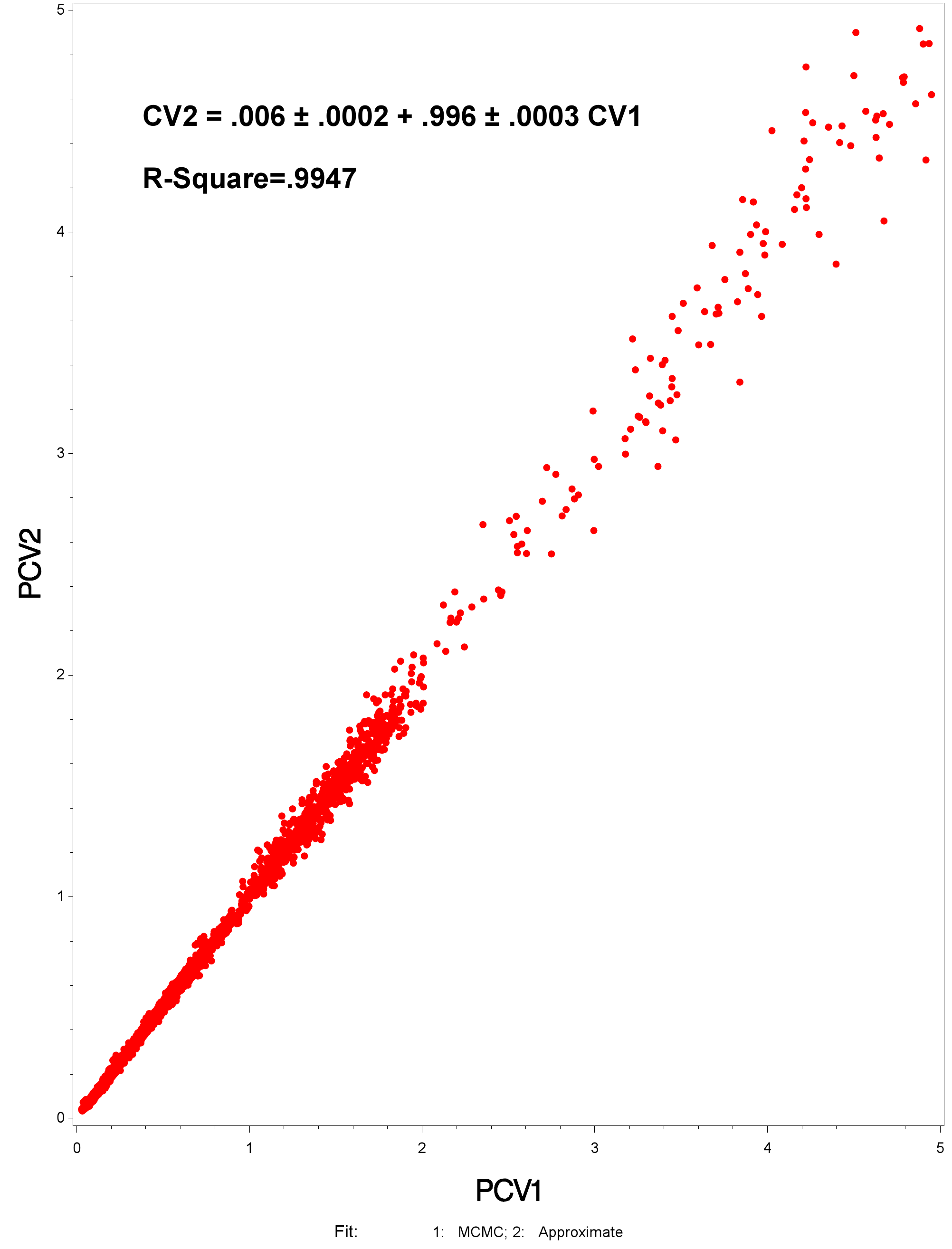}
\label{FG:poCV}
\end{center}
\end{figure}

\begin{figure}[htp]
	\begin{center}
		\caption{\textbf{Comparison of the posterior means (PM) of the ward proportions
					by the approximate method and the MCMC method}}
	\includegraphics[height=8in,width=6in]{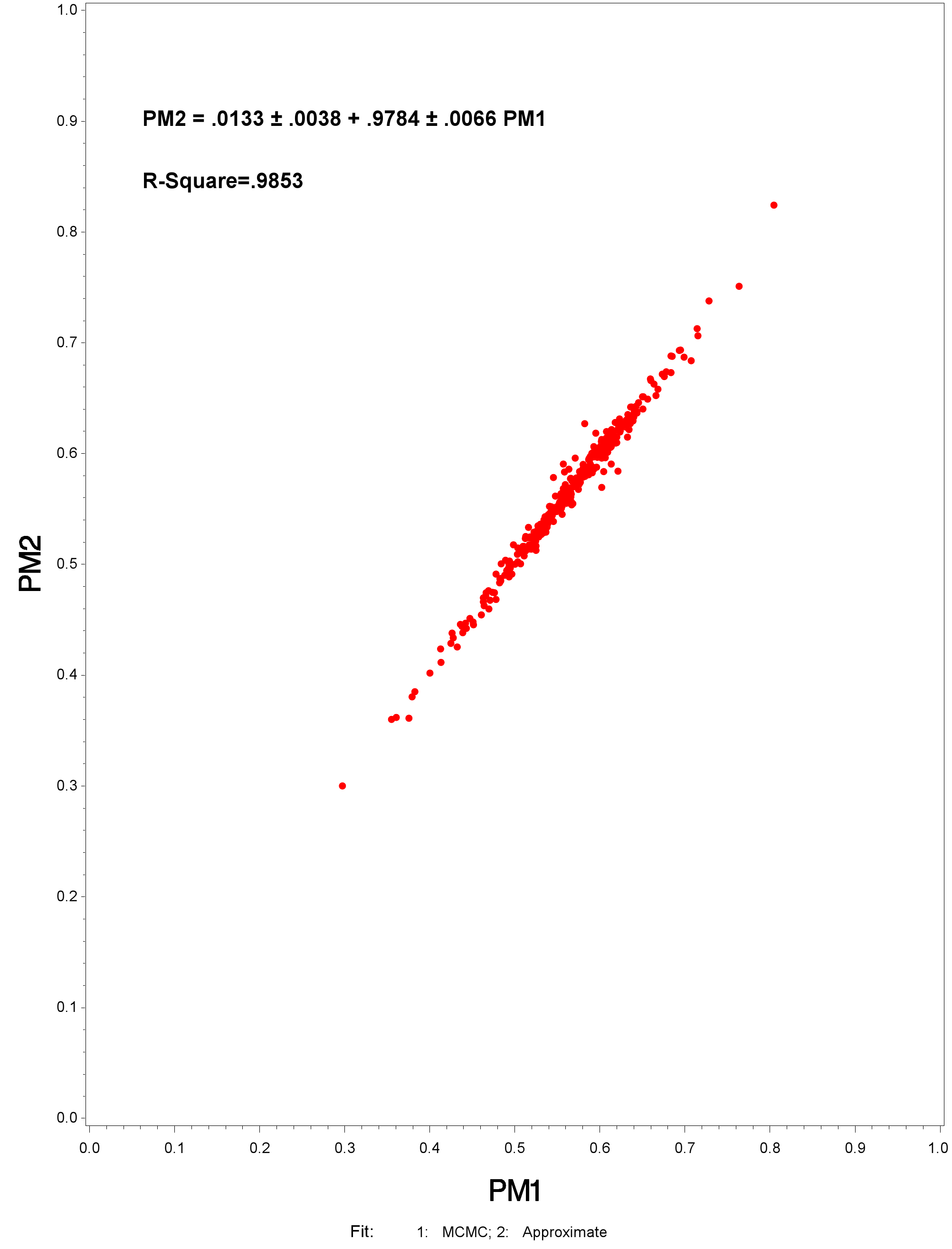}			
\label{FG:poPMp}
	\end{center}
\end{figure}

\begin{figure}[htb]
	\begin{center}
		\caption{\textbf{Comparison of the posterior standard deviations (PSD) of the ward
			proportions by the approximate method  and the MCMC method}}
			\includegraphics[height=8in,width=6in]{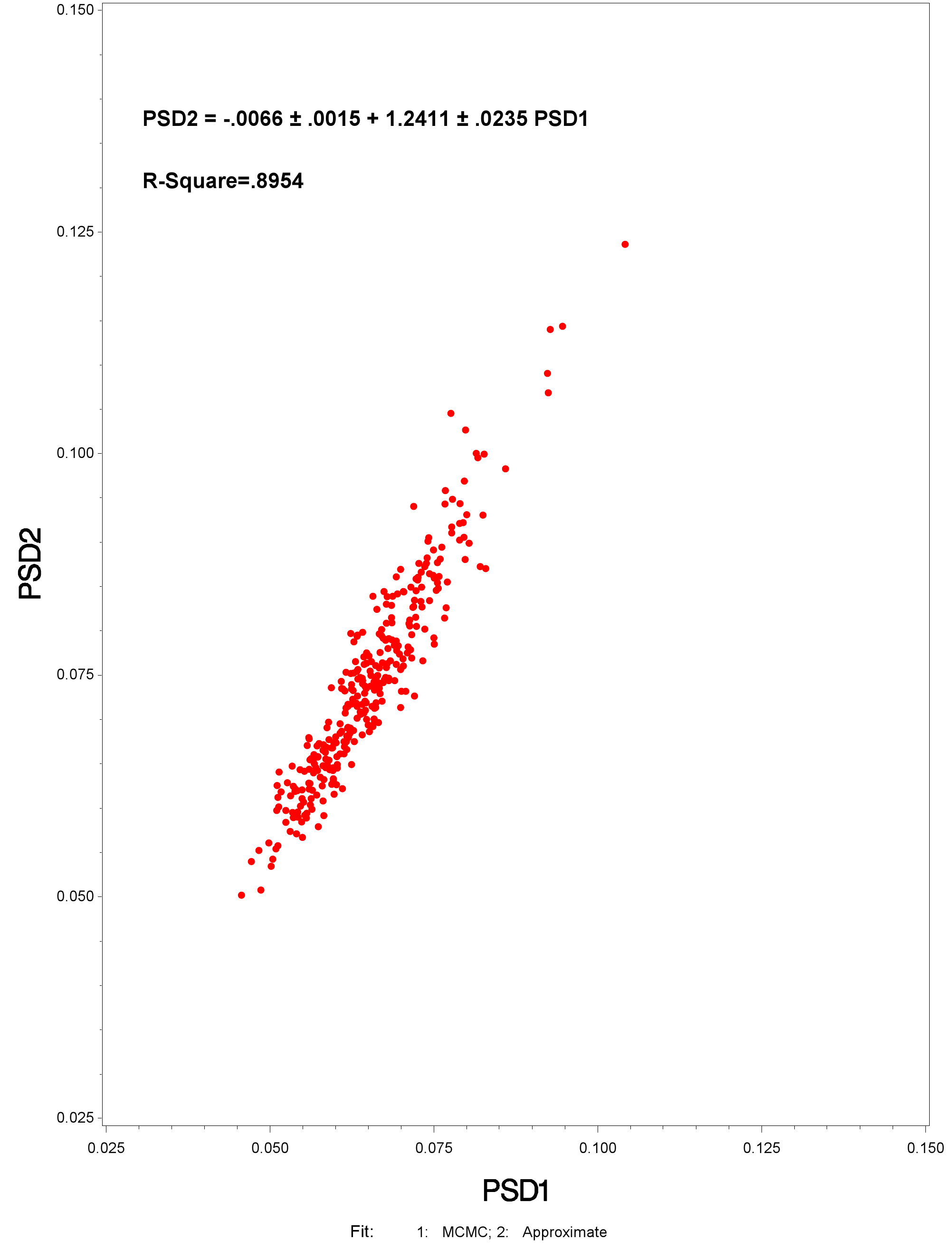}	
		\label{FG:poPSDp}
	\end{center}
\end{figure}

\begin{figure}[htb]
	\begin{center}
		\caption{\textbf{Comparison of the posterior coefficient of variations (CV) of the  ward proportions by the  approximate method and the MCMC method}}
		\includegraphics[height=8in,width=6in]{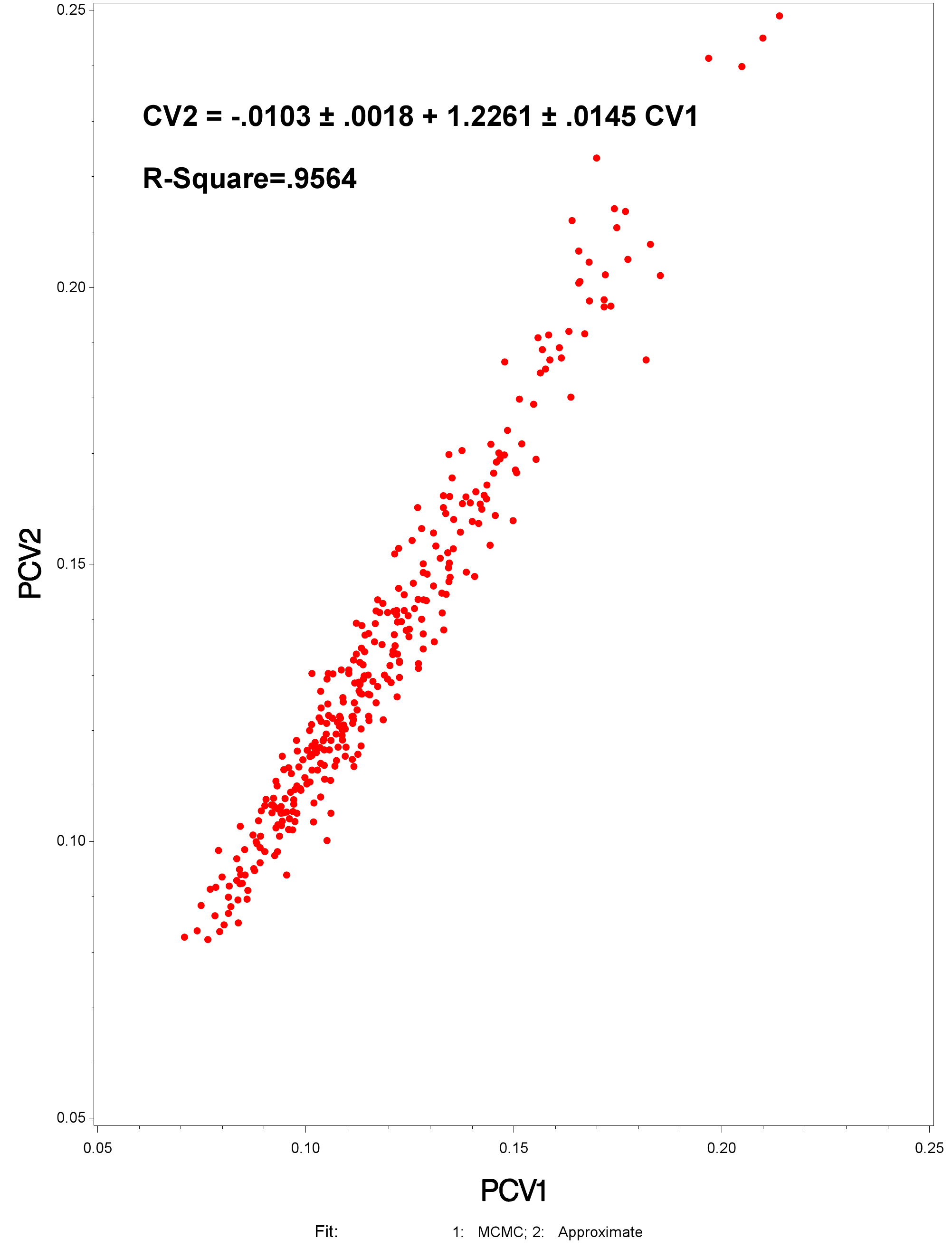}	
		\label{FG:poCVp}
	\end{center}
\end{figure}

\begin{figure}[htb]
	\begin{center}
		\caption{\textbf{Plots of the empirical posterior densities of $\delta^2$ and
			$\beta_0$ for the approximate method and the MCMC method}} 
		\includegraphics[height=8in,width=6in]{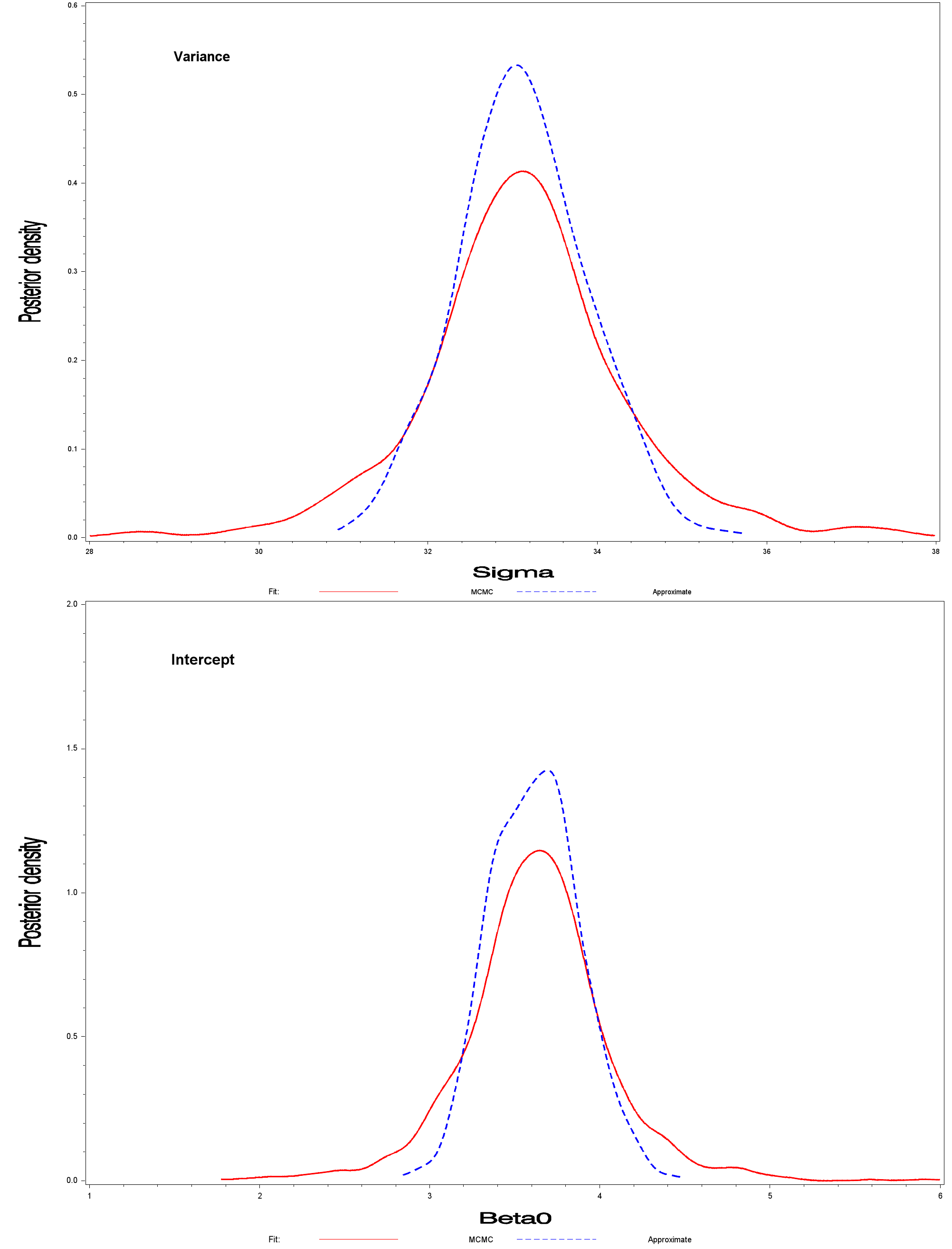}
		\label{FG:podb}
	\end{center}
\end{figure}

\begin{figure}[htb]
	\begin{center}
		\caption{\textbf{Plots of the empirical posterior densities of regression coefficients
			for the approximate method  and the MCMC method}}
		\includegraphics[height=8in,width=6in]{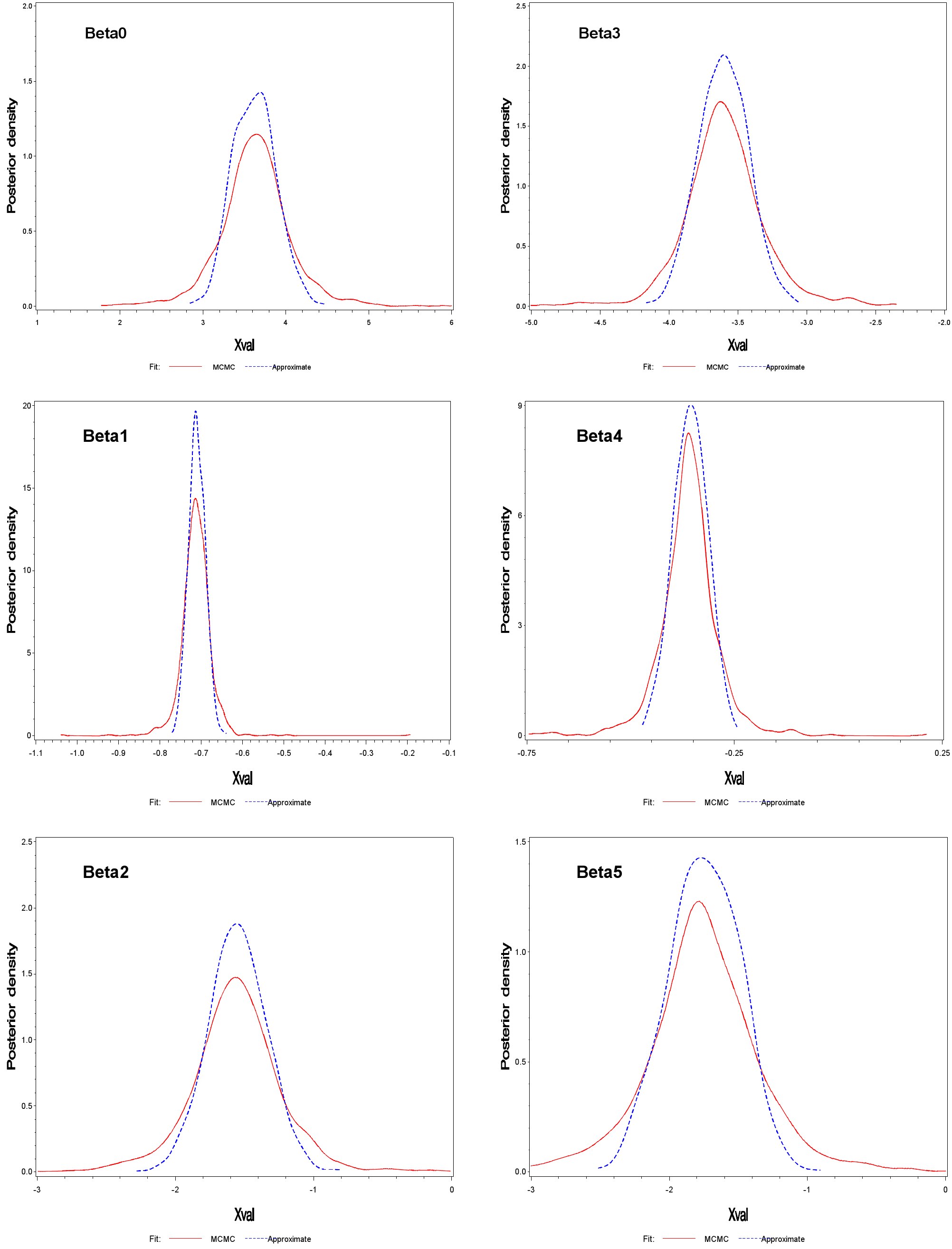}
		\label{FG:pobeta}
	\end{center}
\end{figure}

\begin{figure}[htb]
	\begin{center}
		\caption{\textbf{Comparison of the posterior means (PM) of the household proportions
			by the two MCMC methods}}
			\includegraphics[height=8in,width=6in]{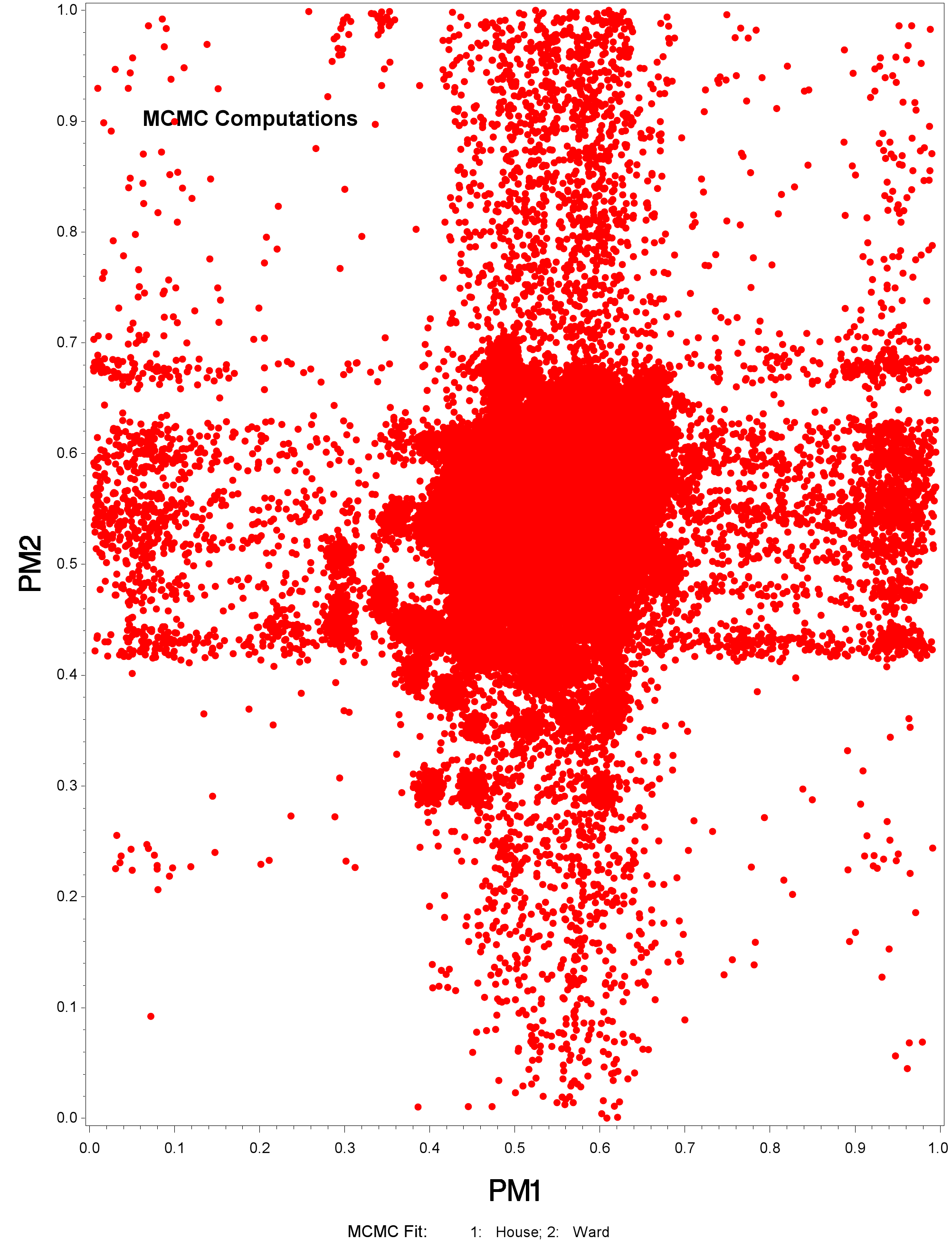}
		\label{FG:poPMe}
	\end{center}
\end{figure}

\begin{figure}[htb]
	\begin{center}
		\caption{\textbf{Comparison of the posterior standard deviations (PSD) of the household proportions by the two MCMC methods}}
		\includegraphics[height=8in,width=6in]{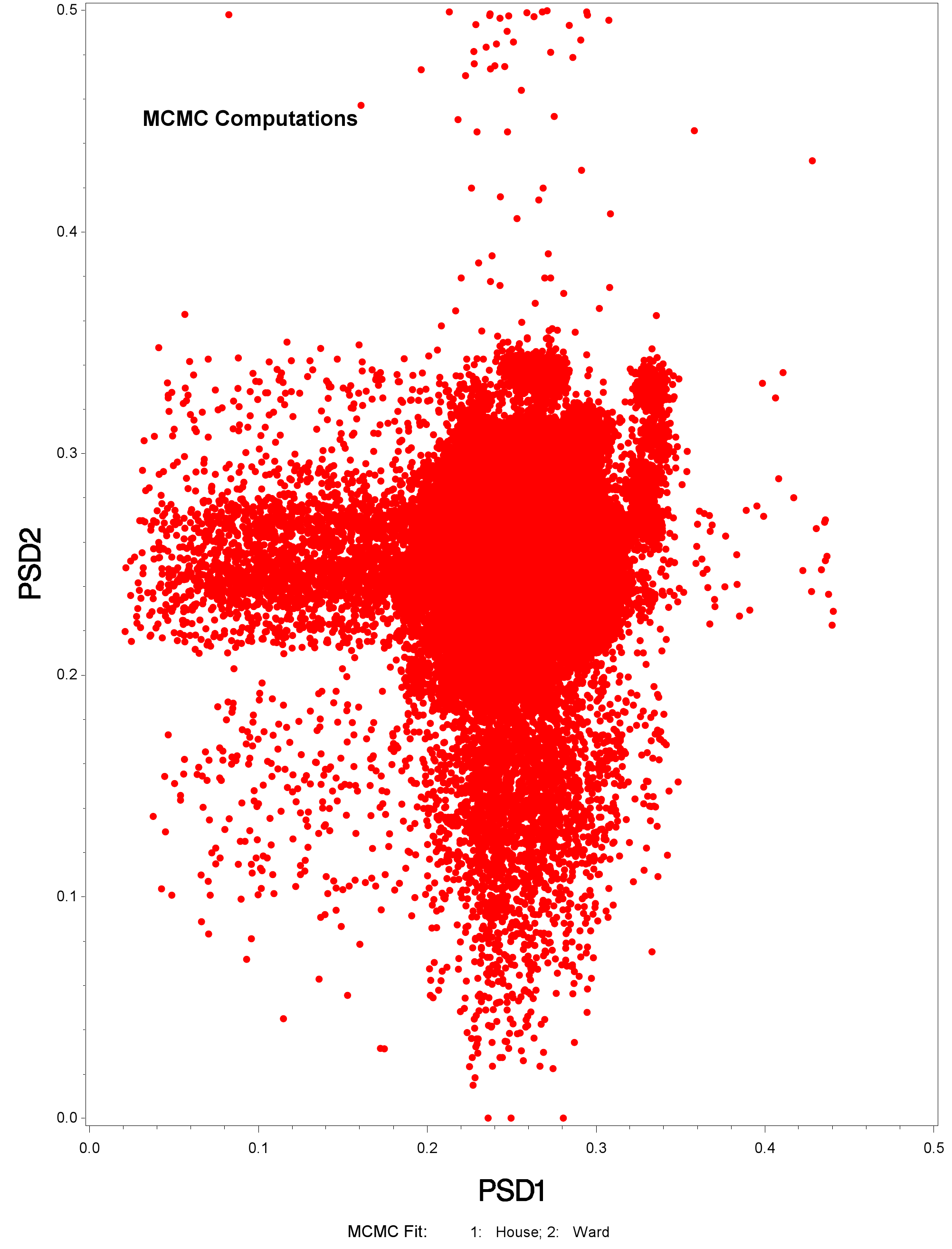}
		\label{FG:poPSDe}
	\end{center}
\end{figure}

\begin{figure}[htb]
	\begin{center}
		\caption{\textbf{Comparison of the posterior coefficient of variations (CV) of the  household proportions by the two MCMC methods}}
		\includegraphics[height=8in,width=6in]{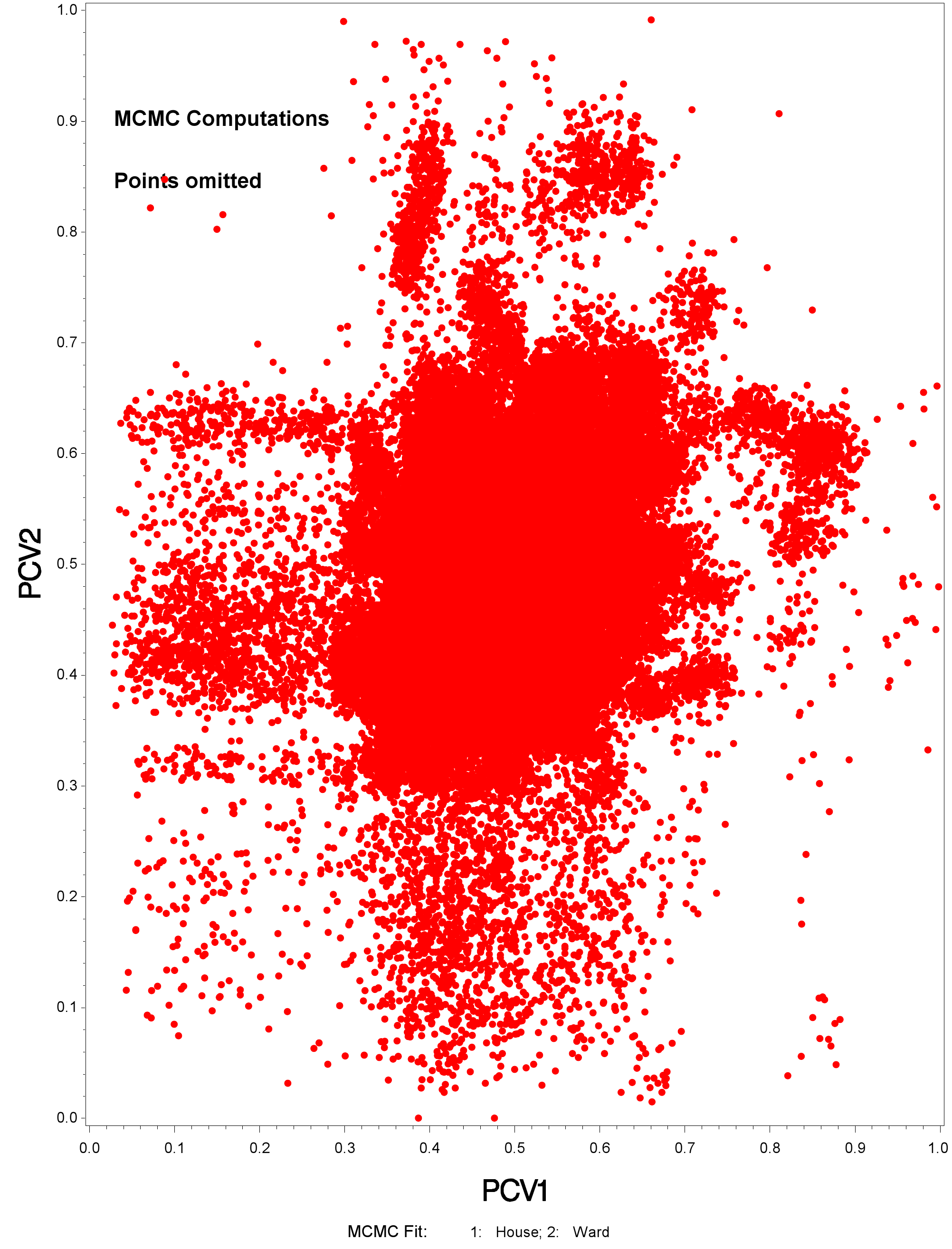}
		\label{FG:poCVe}
	\end{center}
\end{figure}

\end{document}